\theoremstyle{plain}
\newtheorem{theorem}{Theorem}[section]
\newtheorem{proposition}{Proposition}
\newtheorem{lemma}{Lemma}[section]
\newtheorem{corollary}{Corollary}
\newtheorem{example}{Example}
\newtheorem*{existence}{Existence Problem}
\newtheorem*{obstruction}{Obstruction Problem}
\newtheorem*{maximality}{Maximality Problem}
\newtheorem*{universality}{Universality Problem}
\theoremstyle{definition}
\newtheorem{remark}{Remark}
\begin{document}

\title{On Maximal, Universal and Complete Extensions of Yang-Mills-Type Theories}

\author{Yuri Ximenes Martins\footnote{yurixm@ufmg.br (corresponding author)}, Luiz Felipe Andrade Campos\footnote{luizfelipeac@ufmg.br},  Rodney Josu\'e Biezuner\footnote{rodneyjb@ufmg.br}\\ \\ \textit{Departamento de Matem\'atica, ICEx, Universidade Federal de Minas Gerais,}  \\  \textit{Av. Ant\^onio Carlos 6627, Pampulha, CP 702, CEP 31270-901, Belo Horizonte, MG, Brazil}}
%\author{\small Yuri Ximenes Martins$^{\textrm{\,1}}$}\thanks{yurixm@ufmg.br}
%\author{Luiz Felipe Andrade Campos$^{\textrm{\,1}}$}\thanks{luizfelipeac@ufmg.br}
%\author{Daniel de Souza Pl\'{a}cido Teixeira$^{\textrm{\,2}}$}\thanks{danielspteixeira@ufmg.br}
%\author{Rodney Josu\'e Biezuner$^{\textrm{\,1}}$}\thanks{rodneyjb@ufmg.br}

%\affil{$^{\textrm{\,1}}$Departamento de Matem\'atica, ICEx, Universidade Federal de Minas
%    Gerais,    Av. Ant\^onio Carlos 6627, Pampulha, CP 702, CEP 31270-901, Belo Horizonte, MG, Brazil}
%\affil{$^{\textrm{\,2}}$Departamento de F\'isica, ICEx, Universidade Federal de Minas
%    Gerais, Av. Ant\^onio Carlos 6627, Pampulha, CP 702, CEP 31270-901, Belo Horizonte, MG, Brazil}

\maketitle
\begin{abstract}
In this paper we continue the program on the classification of extensions of the Standard Model of Particle Physics started in \cite{category_extension}. We propose four complementary questions to be considered when trying to classify any class of extensions of a fixed Yang-Mills-type theory $S^G$: existence problem, obstruction problem, maximality problem and universality problem. We prove that all these problems admits a purely categorical characterization internal to the category of extensions of $S^G$. Using this we show   that maximality and universality are dense properties, meaning that if they are not satisfied in a class $\mathcal{E}(S^G;\hat{G})$, then they are in their ``one-point compactification'' $\mathcal{E}(S^G;\hat{G})\cup \hat{S}$ by a specific trivial extension $\hat{S}$. We prove that, by means of assuming the Axiom of Choice, one can get another maximality theorem, now independent of the trivial extension $\hat{S}$. We consider the class of almost coherent extensions, i.e, complete, injective and of pullback-type, and we show that for it the existence and obstruction problems have a complete solution. Using again the Axiom of Choice, we prove that this class of extensions satisfies the hypothesis of the second maximality theorem.
\end{abstract}
%TC:ignore
\textbf{Keywords:} Yang-Mills theories, classification, universality, complete extensions.  

\section{Introduction}
\quad\;\, While Standard Model of Elementary Particle Physics (SM) is probably the most accurate and tested physical theory \cite{review_particle_physics}, there are indications that it must be viewed as an effective theory of a more fundamental extended theory \cite{beyond_SM_1,beyond_SM_2}. In the last decades a cascade of models were proposed \cite{string,higher_dimension_1,higher_dimension_2,QG_1,QG_2,SM_extension_1,SM_extension_2,SM_extension_3,next_minimal_SM,next_minimal_SM_2,next_minimal_SM_3}. In \cite{category_extension} the authors started a program aiming to classify all the possible extensions of SM, starting with the Yang-Mills part. We showed that most of the existing proposals of extensions for YM theories, such as \cite{YM_deformation_1, YM_deformation_2, YM_deformation_CS, tensor_gauge_1, tensor_gauge_2, tensor_gauge_3, SW_map, forgacs1980, Harnad, PhysRevD.99.115026}, are particular examples of a simple and unifying notion of extension. We also proved that the Higgs mechanism and the YM theories in an spinorial background can also be regarded as examples of this unifying notion of extension, defined as follows.

Let $S^G:\operatorname{Conn}(P;\mathfrak{g})\rightarrow \mathbb{R}$ the action functional of a YM theory with gauge group $G$ and instanton sector $P$. An \textit{extension} of $S^G$ consists of an extension $\hat{G}$ of $G$, which itself induces an extension $\hat{P}$ of $\hat{G}$, and another action functional $\hat{S}^{\hat{G}}:\widehat{\operatorname{Conn}}(\hat{P};\hat{\mathfrak{g}})\rightarrow \mathbb{R}$, playing the role of the \textit{extended action funcional}, such that, at least when restricted to a subspace $C^1(\hat{P};\hat{\mathfrak{g}})$, it can be written as 
\begin{equation}\label{extension_intro}
  \hat{S}^{\hat{G}}[\alpha]=S^G[D(\alpha)]+C[\alpha],  
\end{equation}
i.e, as a sum of the original YM theory (evaluated in connections depending on $\alpha$) and a correction term. In other words, we have a map $\delta:C^1(\hat{P};\hat{\mathfrak{g}})\rightarrow \operatorname{Conn}(P;\mathfrak{g})$, which is part of the data defining an extension, such that $\hat{S}^{\hat{G}}\circ \jmath= S^G \circ \delta +C$, where $\jmath$ is the inclusion of the correction subspace in the extended domain. Actually, in \cite{category_extension} the authors considered the extensions above not for YM theories, but actually for what they called \textit{Yang-Mills-type} (YMT) theories, which differ from the YM theories in allowing an arbitrary pairing of the curvature in the action funcional. This has the advantage of avoiding the usual requirements of compactness and semi-simplicity on $G$ and the existence of a semi-Riemannian metric on $M$\footnote{Which can suffer of topological obstructions \cite{lorentzian} that become very strong if coupled with other geometric requirements \cite{EHP_yuri_rodney,martins2019geometric}.}. 
For a fixed extension $\hat{G}$ of the gauge group $G$ we considered the space $\operatorname{Ext}(S^G;\hat{G})$ of extensions of a given YMT theory $S^G$ in the above sense. It was shown that for every group $\mathbb{G}$ acting on $\mathbb{R}$ and every commutative unital ring $R$, the space $\operatorname{Ext}(S^G;\hat{G})$ has an induced structure of continuous $R[\mathbb{G}]$-module bundle. Therefore, in view of the classification problem described above, the main strategy is to try a direct topological classification of this bundle. For instance, in \cite{category_extension} the authors conjectured that for every $S^G$ and every $\hat{G}$ the bundle of extensions can be regarded as a subbbundle of a trivial bundle. We notice, however, that another approach is to look at particular subclasses $\mathcal{E}(S^G;\hat{G})\subset \operatorname{Ext}(S^G;\hat{G})$ and try to get some information about it. This is the approach adopted in this paper. More precisely, given a class $\mathcal{E}(S^G;\hat{G})$ of extensions, we propose to consider four questions for it:
\begin{enumerate}
    \item \textit{Existence Problem}: Does it contain nontrivial elements?
    \item \textit{Obstruction Problem}: For an arbitrarily given action functional $\hat{S}$, what are the obstructions to regard it as an element of $\mathcal{E}(S^G;\hat{G})$?
    \item \textit{Maximality Problem}: Is it internally maximal to some other class of extensions?
    \item \textit{Universality Problem}: Is it internally universal to some other class of extensions?
\end{enumerate}

The first two problems are closely related to the classification problem restricted to the class $\mathcal{E}(S^G;\hat{G})$. Indeed, recalling that some trivial extensions exist for any $S^G$ and $\hat{G}$ \cite{category_extension}, the existence problem is about identifying if $\mathcal{E}(S^G;\hat{G})$ contains \textit{at least one} nontrivial extension. Assuming that the existence problem has a positive solution, the obstruction problem is about the study of the space of \textit{all} these nontrivial extensions. 

The other two problems (the maximality problem and the universality problem) are more refined. They are about searching for extensions in the class $\mathcal{E}(S^G;\hat{G})$ which are not only nontrivial, but actually special in some sense. Indeed, we say that $\mathcal{E}(S^G;\hat{G})$ is \textit{internally maximal} relative to other class $\mathcal{E}'(S^G;\hat{G})$ if there is a map $\mu:\mathcal{E}(S^G;\hat{G}) \rightarrow \mathcal{E}'(S^G;\hat{G})$, corresponding to a way to internalize one class into the other, meaning that for each other function $\nu:\mathcal{E}(S^G;\hat{G}) \rightarrow \mathcal{E}'(S^G;\hat{G})$, every extension $\hat{S}^{\hat{G}}\in \mathcal{E}(S^G;\hat{G})$ is such that $\mu(\hat{S}^{\hat{G}})$ can be regarded as an extension of $\nu(\hat{S}^{\hat{G}})$. Thus, this means that $\mu(\hat{S}^{\hat{G}})$ is the maximal way to regard $\hat{S}^{\hat{G}}$ as an element of $\mathcal{E}'(S^G;\hat{G})$. Notice that $\mu(\hat{S}^{\hat{G}})$ is required to be just an extension of $\nu(\hat{S}^{\hat{G}})$, which a priori can be realized in many different ways. If for every $\hat{S}^{\hat{G}}$ there is a unique way to view $\mu(\hat{S}^{\hat{G}})$ as an extension of $\nu(\hat{S}^{\hat{G}})$ we say that $\mathcal{E}(S^G;\hat{G})$ is not only internally maximal, but \textit{internally universal}, relative to $\mathcal{E}'(S^G;\hat{G})$.

Notice that by the definition of maximality and universality, it is suggestive to believe that the fact of a class $\mathcal{E}(S^G;\hat{G})$ being maximal or universal depends crucially on the ambient class $\mathcal{E}'(S^G;\hat{G})$. We will prove, however, that maximality and universality are \textit{dense} properties. If $X$ is a set and $\mathcal{P}(X)$ is its power set, we say that a property $\rho$ is \textit{dense in $X$} if there exists an element $x\in X$ such that $\rho$ holds in every $X_0\cup {x}$ for every $X_0 \in \mathcal{P}(X)$. Equivalently, if there exists $x\in X$ such that if $x\in X_0$, then $\rho$ holds in $X_0 \in \mathcal{P}(X)$\footnote{The relation between this notion of density and the usual notion from topology will be discussed on Section \ref{sec_maximality_universality}.}. Thus, our first main result is the following:$\underset{\;}{\;}$

\noindent \textbf{Theorem A.} \textit{Given a YMT theory $S^G$ and an extension $G\hookrightarrow \hat{G}$ of $G$, there exists an extension $\hat{S}_0^{\hat{G}}$ of $S^G$ such that if it belongs to a class $\mathcal{E}'(S^G;\hat{G})$, then every other class $\mathcal{E}(S^G;\hat{G})$ is internally maximal and universal relative to $\mathcal{E}'(S^G;\hat{G})$.}$\underset{\;}{\;}$

The theorem above depends crucially on the fact that the extensions of a YMT theory $S^G$ constitute a category $ \mathbf{Ext}(S^G;\hat{G})$, as described in \cite{category_extension}. In fact, $\hat{S}_0^{\hat{G}}$ is precisely the terminal object of this category, which is isomorphic to a certain null-type extension and, therefore, to a trivial extension. Although this does not contradict the existence problem, it would be interesting to have maximality and universality criteria independent of trivial extensions. We prove that if we assume the Axiom of Choice in the form of Zorn's Lemma, then such a criterion exists and is obtained by replacing the presence of null-type extensions with certain coherence conditions on the ambient class $\mathcal{E}'(S^G;\hat{G})$, which produces constraints on the size of the classes which are internally maximal/universal to it. More precisely, we prove the following result: $\underset{\;}{\;}$

\noindent \textbf{Theorem B.} \textit{Given a YMT theory $S^G$ and an extension $G\hookrightarrow \hat{G}$ of $G$, if a class of extensions $\mathcal{E}'(S^G;\hat{G})$ is $I$-coherent for maximality (resp. for universality), then every other class $\mathcal{E}(S^G;\hat{G})$ satisfying $\vert \mathcal{E}(S^G;\hat{G})\vert \leq \vert I\vert $ is internally maximal (resp. universal) to $\mathcal{E}'(S^G;\hat{G})$.}$\underset{\;}{\;}$    

Notice that the collection $\operatorname{Ext}(S^G;\hat{G})$ of all extensions of a YMT theory $S^G$ can be partitioned into two disjoint classes: those extensions whose correction term is null, i.e, such that $C\equiv 0$, and those extensions such that $C\neq 0$. Following the terminology of \cite{category_extension}, we say that the first class is of the \textit{complete extensions}, while the second one is of the \textit{incomplete extensions}. In this article we give a special focus on classes of complete extensions, leaving the class of incomplete extensions to a parallel work \cite{gauge_breaking_YMT_extensions}. We considered the class $\operatorname{Coh}(S^G;\hat{G})$ of \textit{almost coherent extensions}, consisting of those extensions of $S^G$ which are complete, whose extended action functional becomes injective on the quotient space of $\widehat{\operatorname{Conn}}(\hat{P};\hat{\mathfrak{g}})$ and which can be built from a categorical pullback. We give a complete description of the existence problem and of the obstruction problem for this class and we prove that if $I$ is small enough, then $\operatorname{Coh}(S^G;\hat{G})$ is $I$-coherent for maximality. In fact, we prove:$\underset{\;}{\;}$

\noindent \textbf{Theorem C.} \textit{Let $S^G$ be a YMT theory and let $\imath: G\hookrightarrow \hat{G}$ be an extension of $G$. Every extension of $S^G$ induces an almost coherent extension of $S^G$ which is nontrivial if the former is. Let $I\subset \operatorname{ED}(S^G;\hat{G})$ be any subset of extended domains which are mutually disjoint. Then $\operatorname{Coh}(S^G;\hat{G})$ is $I$-coherent for maximality, so that every class $\mathcal{E}(S^G;\hat{G})$ with cardinality at least $\vert I\vert$ is internally maximal to $\operatorname{Coh}(S^G;\hat{G})$.$\underset{\;}{\;}$}

The paper is organized as follows. In Section \ref{sec_YMT} we review the basic definitions and the basic facts of \cite{category_extension} concerning YMT theories, the space of extensions and the category of extensions. In Section \ref{sec_extension_problem} the four extension problems described above are presented in more detail and we discuss how to rewrite them in purely categorical language. This is a crucial step in proving Theorems A and B, which is the main objective of Section \ref{sec_maximality_universality}. Finally, in Section \ref{sec_concrete_situation} we introduce the class of almost coherent extensions and we prove Theorem C by means of several steps.

\section{Background}\label{sec_YMT}
\quad\;\, As defined in \cite{category_extension} and following the differential geometry conventions of \cite{kobayashi1996foundations,EHP_yuri_rodney,martins2019geometric}, a Yang-Mills-type theory over a $n$-dimensional smooth manifold $M$ is given by: 
\begin{enumerate}
    \item a real or complex Lie group $G$, regarded as the gauge group of internal symmetries;
    \item a principal $G$-bundle $\pi:P\rightarrow M$, called \emph{instanton sector} of the theory, and whose automorphism group, called the \emph{group of global gauge transformations} is denoted $\operatorname{Gau}_G(P)$;
    \item a $\mathbb{R}$-linear pairing $\langle\cdot,\cdot\rangle: \Omega_{he}^2(P,\mathfrak{g})_\mathbb{R}\otimes_{\mathbb{R}} \Omega_{he}^2(P,\mathfrak{g})_{\mathbb R}\rightarrow \operatorname{Dens}(M)_{\mathbb{R}}$.
\end{enumerate}

Above, $\Omega_{he}^k(P,\mathfrak{g})$ denotes the vector space of horizontal and $G$-equivariant $\mathfrak{g}$-valued $k$-forms in $P$ regarded as a $C^{\infty}(M)$-module via the restriction of scalars by $\pi^*:C^{\infty}(M)\rightarrow C^{\infty}(P)$, $\mathfrak{g}$ is the Lie algebra of $G$ and $\operatorname{Den}(M)$ is the space of 1-densities on $M$, also regarded as a $C^{\infty}(M)$-module. Furthermore, $V_{\mathbb{R}}$ will denote scalar restriction by the inclusion $\mathbb{R} \hookrightarrow C^{\infty}(M)$.  The \emph{action functional} of a YMT over $M$ is the functional 
$$S^G:\operatorname{Conn}(P;\mathfrak{g})\rightarrow \mathbb{R} \quad \text{given by} \quad S^G(D)=\int_M\langle F_D,F_D\rangle,$$ where $\operatorname{Conn}(P;\mathfrak{g})\subset \Omega ^1(P;\mathfrak{g})$ is the set of connection 1-forms on $P$ and $F_D=d_DD$ is the field-strength of $D$, where $d_D:\Omega^k(P;\mathfrak{g})\rightarrow \Omega_{he}^{k+1}(P;\mathfrak{g})$ is the \emph{exterior covariant derivative} of $D$, such that $d_D\alpha = d\alpha + D[\wedge]_{\mathfrak{g}}\alpha$. Following \cite{EHP_yuri_rodney,martins2019geometric}, $[\wedge]_\mathfrak{g}$ denotes the wedge product induced by the Lie bracket of $\mathfrak{g}$. Without risk of confusion, it will be denoted simply by $[\wedge]$.

\begin{example}
\emph{Every Yang-Mills theory over a semi-Riemannian manifold $(M,g)$ is a YMT theory with the pairing $\langle \alpha,\beta \rangle = \operatorname{tr}( \alpha [\wedge] \star_g \beta)$. As shown in \cite{category_extension}, for every triple $(M,G,P)$ the corresponding collection $\operatorname{YM}_G(P)$ of YMT theories over $M$ which has $G$ as the gauge group and $P$ as the instanton sector is a real vector space. If $G$ is discrete or $\dim M<2$, then this space is zero-dimensional, implying that under such conditions this YMT theory is a YM theory. Otherwise it is infinite-dimensional, meaning that when working with YMT theories instead of only with YM theories we will have an infinity amount of new degrees of freedom to work with.}
\end{example}
\begin{remark}
In some cases, such as for those discussed in \cite{category_extension}, it is better to work with a more restricted class of YMT theories. Indeed, we say that $S^G$ is \textit{tensorial} (resp. \textit{gauge invariant}) if its pairing $\langle \cdot, \cdot \rangle$ is $C^{\infty}(M)$-linear (resp. invariant by the action of $\operatorname{Gau}_G(P)$, i.e,  $\langle f^{*}\left(F_D\right),f^{*}\left(F_D\right)\rangle = \langle F_D,F_D\rangle$ for every $f\in \operatorname{Gau}_G(P)$). In the following, however, we will work in full generality. Notice that by avoiding the gauge invariance hypothesis on $\langle \cdot, \cdot \rangle$ we are allowing that $S^G$ be gauge breaking in the classical sense. Such hypotheses of linearity and gauge invariance are important for doing local computations, as in \cite{gauge_breaking_YMT_extensions}.
\end{remark}

\subsection{Extensions}\label{sec_extension}

\quad\;\, Let $G$ be a Lie group. A \emph{basic extension} of $G$ is another Lie group $\hat{G}$ and an injective Lie group homomorphism $i:G\rightarrow \hat{G}$.
\begin{example}

\emph{Every split Lie group extension induces a basic extension. Actually, a Lie group extension is a basic extension iff it is split \cite{category_extension}. On the other hand, there are many basic extensions which are not Lie group extensions.}
\end{example}
Let $P\rightarrow M$ be a principal $G$-bundle. Each basic extension $\imath : G\hookrightarrow\hat{G}$  induces a $\hat{G}$-bundle $\hat{P}$, which becomes equipped with a monomorphism $\imath: P\hookrightarrow\hat{P}$ and a group homomorphism $\xi:\operatorname{Gau}_{G}(P)\rightarrow \operatorname{Gau}_{\hat{G}}(\hat{P})$, as described in \cite{advanced_field_theory,xi_homotopy,category_extension}. Let $S^G$ be a YMT theory with gauge group $G$ and instanton sector $P$. As defined in \cite{category_extension}, an \emph{extension} of $S^G$ relative to a basic extension $\imath: G \hookrightarrow \hat{G}$ of $G$ is given by:
\begin{enumerate}
    \item a $\operatorname{Gau}_{\hat{G}}(\hat{P})$-invariant subspace of equivariant 1-forms in $\hat{P}$, called \textit{extended domain}, such that: \begin{equation}{\label{domain_extension}
    \operatorname{Conn}(\hat{P};\hat{\mathfrak{g}}) \subseteq \widehat{\operatorname{Conn}}(\hat{P};\hat{\mathfrak{g}})\subseteq \Omega^{1}_{eq}(\hat{P};\hat{\mathfrak{g}}), }
    \end{equation}
    \item a $\operatorname{Gau}_{\hat{G}}(\hat{P})$-invariant functional $\hat{S}^{\hat{G}}:\widehat{\operatorname{Conn}}(\hat{P};\hat{\mathfrak{g}})\rightarrow \mathbb{R}$, called the \emph{extended action functional};
    \item a subspace $0\subseteq C^1(\hat{P};\hat{\mathfrak{g}})\subseteq \widehat{\operatorname{Conn}}(\hat{P};\hat{\mathfrak{g}})$, called the  \emph{correction subspace};
    \item a \emph{correction functional} $C:C^1(\hat{P};\hat{\mathfrak{g}})\rightarrow \mathbb{R}$ and map $\delta: C^1(\hat{P};\hat{\mathfrak{g}})\rightarrow \operatorname{Conn}(P;\mathfrak{g})$ such that
    \begin{equation}\label{decomposition_extension}
      \hat{S}^{\hat{G}}{\big|}_{C^1(\hat{P};\hat{\mathfrak{g}})} = S^{G}\circ \delta + C.
    \end{equation}
\end{enumerate}
\begin{remark}
As discussed in the Introduction and in Section 3 of \cite{category_extension}, this definition was proposed as an unification for three kinds of  ``extensions of YM theories'' arising in the literature: extensions by deformations, extensions by adding correction terms and extensions by augmenting the gauge group.
\end{remark}
\begin{remark}
Let $\operatorname{Ext}(S^G;\mathfrak{g})$ be the collection of all extensions of a given YMT theory $S^G$ relative to a fixed basic extension $\imath:G\hookrightarrow \hat{G}$ and let  $\operatorname{ED}(S^G;\hat{G})$ be the space of all extended domains, i.e, of all $\operatorname{Gau}_{\hat{G}}(\hat{P})$-invariant sets of 1-forms in $\hat{P}$ satisfying (\ref{domain_extension}). In \cite{category_extension} it was proven that the obvious projection $\pi: \operatorname{Ext}(S^G;\mathfrak{g})\rightarrow \operatorname{ED}(S^G;\hat{G})$ is a bundle of commutative monoids. Furthermore, each additive action of a group $\mathbb{G}$ on $\mathbb{R}$ lifts the above structure of a bundle of commutative monoids to a structure of $R[\mathbb{G}]$-module bundle, where $R[\mathbb{G}]$ is the group $R$-algebra and $R$ is any commutative unital ring.
\end{remark}

Fixing extensions $\hat{S}_i^{\hat{G}}\in \operatorname{Ext}(S^G;\hat{G})$, with $i=1,2$, a \textit{morphism} between them is a pair of functions $(f,g)$ such that $f:\widehat{\operatorname{Conn}}_1(\hat{P};\hat{\mathfrak{g}})\rightarrow \widehat{\operatorname{Conn}}_2(\hat{P};\hat{\mathfrak{g}})$ and $g:C^1_1(\hat{P};\hat{\mathfrak{g}})\rightarrow C^1_2(\hat{P};\hat{\mathfrak{g}})$, and such that the underlying structures are preserved, i.e, such that $f$ is $\operatorname{Gau}_{\hat{G}}(\hat{P})$-equivariant and the diagram below commutes. This defines a category $\mathbf{Ext}(S^G;\hat{G})$ which can be realized as a conservative subcategory of a product of slice categories \cite{category_extension}.
\begin{equation}{\label{diagram_equivalence_extensions}
\xymatrix{   && \ar@/_/[lld]_-{\delta _1} \ar[dd]^g C^1_1(\hat{P};\hat{\mathfrak{g}}) \ar[ld]_{C_1} \ar@{^(->}[rr] && \widehat{\operatorname{Conn}}_1(\hat{P};\hat{\mathfrak{g}}) \ar[ld]_-{\hat{S}^{\hat{G}}_1} \ar[dd]^f \\
\operatorname{Conn}(P;\mathfrak{g}) & \mathbb{R} & & \mathbb{R} \\
 && \ar@/^/[llu]^-{\delta _2} \ar[lu]^{C_2} C^1_2(\hat{P};\hat{\mathfrak{g}}) \ar@{^(->}[rr] &&   \widehat{\operatorname{Conn}}_2(\hat{P};\hat{\mathfrak{g}}) \ar[lu]^-{\hat{S}^{\hat{G}}_2}}}
\end{equation}

\begin{remark}\label{remark_null_type}
For every YMT $S^G$ and every basic extension $\imath :G\hookrightarrow \hat{G}$ the category $\mathbf{Ext}(S^G;\hat{G})$ has a terminal object given by the null extension such that $\widehat{\operatorname{Conn}}(\hat{P};\hat{\mathfrak{g}})=\Omega ^1_{eq}(\hat{P};\hat{\mathfrak{g}})$, $C^1(\hat{P};\hat{\mathfrak{g}})=0$ and such that the maps $\hat{S}^{\hat{G}}$, $C$ and $\delta$ are all null. See Example 2 and Example 15 of \cite{category_extension}.
\end{remark}
\begin{remark}\label{remark_trivial_extensions}
The null extension is an example of a \textit{trivial extension} which exists for any YMT theory $S^G$. In addition to the null extensions (where the maps involved are null) the trivial extensions also include the identity extensions and the constant extensions, whose maps are all identities or constants, respectively. See Section 3.1 of \cite{category_extension}. Extensions which are not of null-type, constant-type or identity-type are called \textit{nontrivial}. In Section 3.4 of \cite{category_extension} it was shown that emergence phenomena (in the sense of \cite{emergence_yuri}) play an important role in building nontrivial extensions.
\end{remark}

\section{The Extension Problems}\label{sec_extension_problem}

\quad\;\, Let $S^G$ be a YMT and let $\imath:G\hookrightarrow \hat{G}$ be a basic extension of $G$. A \textit{class} of extensions of $S^G$, relative  to $\imath:G\hookrightarrow \hat{G}$, is just a subset $\mathcal{E}(S^G;\hat{G})$ of $\operatorname{Ext}(S^G;\hat{G})$. 
\begin{remark}
Throughout the paper we will always consider classes of extensions of the same YMT $S^G$ and relative to the same basic extension $\hat{G}$ of $G$. Therefore, in order to simplify the notation, when $S^G$ and $\hat{G}$ are implicitly understood we will write $\mathcal{E}$ instead of $\mathcal{E}(S^G;\hat{G})$ to denote a class of extensions. 
\end{remark}

The basic problems concerning classes of extensions of a given YMT theory $S^G$ relative to a basic extension $\imath:G\hookrightarrow$ are the following. 

\begin{existence}\label{question_existence}
Let $\mathcal{E}$ be a  class of extensions. Under which conditions it has nontrivial elements in the sense of Remark \ref{remark_trivial_extensions}?
\end{existence}

\begin{obstruction}\label{question_obstruction}
Let $\hat{S}$ be an action funcional. What are the obstructions to realize it as an element of $\mathcal{E}$?
\end{obstruction}

Besides these basic questions, there are others which are more refined. Let $\operatorname{Clss}(S^G;\hat{G})$ be the collection of all classes of extensions of $S^G$ relative to $\imath:G\hookrightarrow \hat{G}$, i.e, the power set of $\operatorname{Ext}(S^G;\hat{G})$, and notice that we have a relation of \textit{maximality} on it, denoted by ``$\hookrightarrow_{\operatorname{max}}$''. Indeed, we say that a class $\mathcal{E}_0$ is \textit{internally maximal} to other class $\mathcal{E}_1$, and we write $\mathcal{E}_0 \hookrightarrow_{\operatorname{max}} \mathcal{E}_1$, if there exists a function $\mu:\mathcal{E}_0\rightarrow \mathcal{E}_1$ satisfying the following condition: for every other function $\nu:\mathcal{E}_0\rightarrow \mathcal{E}_1$ and every $\hat{S}^{\hat{G}}\in \mathcal{E}_0$, we can regard $\mu(\hat{S}^{\hat{G}})$ as an extension of $\nu(\hat{S}^{\hat{G}})$.  We can also consider another relation, called \textit{universality relation}, denoted by $\hookrightarrow_{\operatorname{unv}}$, in which  $\mathcal{E}_0 \hookrightarrow_{\operatorname{unv}} \mathcal{E}_1$ iff $\mathcal{E}_0$ is internally maximal to $\mathcal{E}_1$ in an universal way, meaning that  $\mu(\hat{S}^{\hat{G}})$ can be regarded as an extension of $\nu(\hat{S}^{\hat{G}})$ in a \textit{unique} way. 

\begin{maximality}
Given two classes of extensions of $S^G$, under which conditions one of them is internally maximal to the other? In other words, what are the classes of $\hookrightarrow_{\operatorname{max}}$?
\end{maximality}
\begin{universality}
Given two classes of extensions of $S^G$, under which conditions one of them is internally universal to the other? In other words, what are the classes of  $\hookrightarrow_{\operatorname{unv}}$?
\end{universality}

Concretely, when the maximality problem has a solution, this means that every extension of $S^G$ belonging to $\mathcal{E}_0$ can be regarded as an element of $\mathcal{E}_1$ in such a way that it itself admits extensions. Not only this, it admits a special extension which is maximal. The physical interpretation of the universality problem is analogous, but now the special extension can be regarded as universal among all other extensions.

Notice, however, that the relations $\hookrightarrow_{\operatorname{max}}$ and $\hookrightarrow_{\operatorname{unv}}$ are transitive, but typically not reflexive, symmetric or antisymmetric. Therefore, the decomposition of $\operatorname{Ext}(S^G;\hat{G})$ into their classes is very complex, which makes the direct analysis of the maximality and universality problems very difficult. Recalling that $\operatorname{Ext}(S^G;\hat{G})$ is the collection of objects of the category $\mathbf{Ext}(S^G;\hat{G})$ and that the notion of ``extension $\hat{S}_2^{\hat{G}}$ of an extension $\hat{S}_1^{\hat{G}}$'' of $S^G$, used to describe the maximality and universality conditions above, can be interpreted by the existence of a morphism $\hat{S}_1^{\hat{G}}\rightarrow \hat{S}_2^{\hat{G}}$ in $\mathbf{Ext}(S^G;\hat{G})$, we conclude that the maximality and universality problems is the study of decompositions of the the collection of objects of a category which are induced by relations determined by morphisms between the objects. 

Since for every category $\mathbf{C}$ there are natural and more well-behaved relations in $\operatorname{Ob}(\mathbf{C})$ (e.g, which are partial order relations and equivalence relations), it is natural to ask if the problems described above cannot be rewritten in purely categorical terms, such that the bad-behaved relations $\hookrightarrow_{\operatorname{max}}$ and  $\hookrightarrow_{\operatorname{unv}}$ become determined by these natural and well-behaved relations. Doing this is the main objective of the remaining of this section.

\begin{remark}[The empty case]\label{remark_empty}
Let $\mathcal{E}=\varnothing $ be the empty class. Then the existence and obstruction problems have obvious negative solutions. Furthermore, by vacuous truth this class is internally universal (and in particular maximal) to any other. Thus, from now on we will consider only nonempty classes of extensions.
\end{remark}

\subsection{Rewriting the Extension Problems}

\quad\;\,Recall that the full subcategories of a given category $ \mathbf{C}$ are in bijection with the subclasses of the class of objects $\operatorname{Ob}(\mathbf{C})$ of $\mathbf{C}$. Thus, to give a class $\mathcal{E}$ of extensions whose underlying basic extension is $\imath:G\hookrightarrow \hat{G}$ is equivalent to giving a full subcategory of $\mathbf{Ext}(S^G;\hat{G})$, which in the following will be denoted by the same notation. We will say that a full subcategory of $\mathbf{Ext}(S^G;\hat{G})$ is \textit{nontrivial} if its class of objects is nontrivial in the sense of Remark \ref{remark_trivial_extensions}, i.e, if it does not contain the null-type, identity-type and constant-type extensions. With this, the existence problem and the obstruction problem can be rewritten in the following equivalent way: 

\begin{existence}[categorical form]
Is a given full subcategory $\mathcal{E}$ of $\mathbf{Ext}(S^G;\hat{G})$ nontrivial? 
\end{existence}
\begin{obstruction}[categorical form]
What are the subcategories of $\mathbf{Ext}(S^G;\hat{G})$ which are adjoint to a given full subcategory $\mathcal{E}$?
\end{obstruction}

The direct categorical analogues of the maximality and universality conditions described in the last section are the following.  We say that a full subcategory $\mathbf{C}_0$ of $\mathbf{C}$ is \textit{weakly internally maximal} to other full subcategory $\mathbf{C}_1$ if there exists a functor $\mu:\mathbf{C}_0\rightarrow\mathbf{C}_1$ such that for every other functor $\nu:\mathbf{C}_0\rightarrow\mathbf{C}_1$ there exists a natural transformation $u:\nu \Rightarrow \mu$. If this natural transformation is unique, we say that $\mathbf{C}_0$ is \textit{weakly universal}. Thus:

\begin{maximality}[weak categorical form] Given two full subcategories of $\mathbf{Ext}(S^G;\hat{G})$, are they weakly internally maximal one to the other? 
\end{maximality}

\begin{universality}[weak categorical form] Are two given full subcategories of $\mathbf{Ext}(S^G;\hat{G})$ weakly internally universal one to the other? 
\end{universality}

Although the direct categorical analogues of the existence and universality problems are \textit{equivalent} to the original problems, we note that the above direct categorical analogues of the maximality and universality problems are a bit \textit{different} of those in Subsection \ref{sec_extension_problem}, which is the reason why we used the adjective ``weakly''. Indeed, notice that every functor $F:\mathbf{C}_0\rightarrow \mathbf{C}_1$ induces a map between the underlying sets of objects and other map between the sets of morphisms. But in the original universality problem we only require information between objects. Therefore, if the map on objects lifts uniquely to a functor, then the weak maximality and weak universality imply maximality and universality in the original sense. 

The problem is that a function $f:\operatorname{Ob}(\mathbf{C}_0)$ does not need extending to a functor $F:\mathbf{C}_0\rightarrow \mathbf{C}_1$. Furthermore, even if such lifts exist, they are typically highly non-unique. Indeed, let $\operatorname{Aut}(\operatorname{img}(f))$ be the disjoint union of the automorphism groups  $\operatorname{Aut}(f(X))$ of $f(X)\in \mathbf{D}$ for every $X\in\mathbf{C}$ and consider the obvious projection $\pi:\operatorname{Aut}(\operatorname{img}(f)) \rightarrow \operatorname{Ob}(\mathbf{C})$. Each section for this projection can be used to modify the functor $F$ lifting $f$ only on morphisms. Furthermore, the modified functor is not equal to $F$ iff the section is nontrivial, which means that the automorphism groups $\operatorname{Aut}(f(X))$ are nontrivial. Only in very rare cases can we ensure that any two functors lifting $f$ are naturally isomorphic \cite{functors_objects}. 

Thus, if the basic obstruction in identifying maps on objects with functors (and therefore the weakly maximality/universality problem with the original maximality/universality problem) are the automorphisms, the idea is to work up to automorphisms.

\subsection{Maximality and Universality Up to Isomorphisms}\label{maximality_up_to_isomorphisms}

 \quad\;\, Let $\mathbf{C}$ be a category and let $\operatorname{Iso}(\mathbf{C})$ denote the quotient space $\operatorname{Ob}(\mathbf{C})/\simeq$ of objects by the isomorphism relation $\simeq$. Notice that the relation ``\emph{there exists a morphism between two objects $X$ and $Y$}'' is a partial order $\leq$ on $\operatorname{Ob}(\mathbf{C})$. It does not need to have a maximal element. Let $\mathbf{C}_0\subset \mathbf{C}$ be a full subcategory, so that we have an induced partial order on $\operatorname{Ob}(\mathbf{C}_0)\subset \operatorname{Ob}(\mathbf{C})$. It is compatible with the isomorphism relation, meaning that if $X\simeq X'$ and $Y\simeq Y'$, then $X\leq Y$ iff  $X'\leq Y'$. Thus, we have an induced partial order $\lesssim$ on $\operatorname{Iso}(\mathbf{C})$. We say that $\mathbf{C}_0$ is a \emph{greatest subcategory} if $(\operatorname{Iso}(\mathbf{C}_0),\lesssim)$ has a greatest element, i.e, if there exists an object $X_0\in \mathbf{C}_0$ and a morphism $Y\rightarrow X_0$ for each other $Y$. 
 
 Now, let $\mathbf{C}_0$ and $\mathbf{C}_1$ be full subcategories of $\mathbf{C}$. We say that $\mathbf{C}_0$ is \textit{internally maximal up to isomorphism} to $\mathbf{C}_1$ if there exists a map on objects $f:\operatorname{Ob}(\mathbf{C}_0) \rightarrow \operatorname{Ob}(\mathbf{C}_1)$ such that the full subcategory defined by its image is a greatest subcategory of $\mathbf{C}_1$.
 
 \begin{maximality}[categorical form] Given two full subcategories of $\mathbf{Ext}(S^G;\hat{G})$, is one internally maximal up to isomorphisms to the other?
 \end{maximality}
 
 Given full categories $\mathbf{C}_0,\mathbf{C}_1$ of $\mathbf{C}$,  notice that $\mathbf{C}_0$ is internally maximal up to isomorphisms to $\mathbf{C}_1$  iff there exists $Y_0\in \mathbf{D}$ such that $Y_0\simeq f(X_0)$ for some $X_0\in \mathbf{C}$ and such that there exists a morphism $Y\rightarrow Y_0$ for each other $Y\in \mathbf{D}$ such that $Y\simeq f(X)$ for some $X\in \mathbf{C}$. Due to the compatibility between $\simeq$ and $\leq$, the previous condition is also equivalent to the folllowing: there exists $X_0 \in \mathbf{C}_0$ such that for each other $X\in \mathbf{C}_0$ there exists a morphism $f(X)\rightarrow f(X_0)$. But this is exactly the maximality condition of Subsection  \ref{sec_extension_problem}. Thus, the classical and the categorical forms of the maximality problem are equivalent. Explicitly, we have proved the following:
 \begin{proposition}\label{prop_maximality}
     Let $S^G$ be a YMT with gauge group $G$ and let $\imath:G\hookrightarrow \hat{G}$ be a basic extension of $G$. A class $\mathcal{E}_0$ of extensions of $S^G$ is internally maximal to other class $\mathcal{E}_1$ iff the full subcategory defined by $\mathcal{E}_0$ is internally maximal up to isomorphisms to the subcategory defined by $\mathcal{E}_1$.
 \end{proposition}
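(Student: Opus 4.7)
The plan is to make precise the chain of equivalences sketched in the paragraph preceding the proposition. The strategy is to unfold both formulations into a common combinatorial statement: there exist a map $f\colon \operatorname{Ob}(\mathcal{E}_0)\to\operatorname{Ob}(\mathcal{E}_1)$ and an object $X_0\in\mathcal{E}_0$ such that for every $X\in\mathcal{E}_0$ there is a morphism $f(X)\to f(X_0)$ in $\mathbf{Ext}(S^G;\hat{G})$; proving the proposition then reduces to matching each of the two formulations against this common statement.

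First I would unfold internal maximality up to isomorphism. A witness $f$ for the up-to-iso condition provides a greatest element $[Y_0]$ of $(\operatorname{Iso}(f(\mathcal{E}_0)),\lesssim)$, and picking $X_0\in\mathcal{E}_0$ with $f(X_0)\simeq Y_0$ gives a candidate pivot. I would then invoke the compatibility of $\simeq$ with the ``there exists a morphism'' preorder---any morphism $A\to B$ can be composed with isomorphisms $A'\xrightarrow{\sim}A$ and $B\xrightarrow{\sim}B'$ into a morphism $A'\to B'$, with fullness of $\mathcal{E}_1\subset \mathbf{Ext}(S^G;\hat{G})$ ensuring all the arrows remain in the relevant class---to transport the morphisms $Y\to Y_0$ (for $Y\simeq f(X)$) into morphisms $f(X)\to f(X_0)$, yielding the combinatorial statement. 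The converse direction is immediate upon setting $Y_0:=f(X_0)$.

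Finally I would match the combinatorial statement against classical internal maximality from Subsection \ref{sec_extension_problem}. Setting $f:=\mu$ and taking $X_0$ to be the pivot guaranteed by the classical condition, the morphisms $\mu(X)\to\mu(X_0)$ are precisely the assertions that $\mu(X_0)$ can be regarded as an extension of $\mu(X)$---which is the categorical interpretation of ``extension of an extension''; the reverse direction proceeds by taking $\mu:=f$ and reading the morphisms back into extension language. The only real obstacle is the compatibility argument used in the transport step, but it is automatic once $\mathcal{E}_0,\mathcal{E}_1\subset\mathbf{Ext}(S^G;\hat{G})$ are realized as full subcategories, so that no morphism or isomorphism is lost in the restriction and the two formulations match on the nose.
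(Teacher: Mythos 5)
Your proposal is correct and takes essentially the same route as the paper's own proof, which likewise unfolds the greatest-subcategory condition to a pivot $X_0$ with morphisms $f(X)\rightarrow f(X_0)$, invokes the compatibility of $\simeq$ with $\leq$ recorded in Subsection \ref{maximality_up_to_isomorphisms} to transport the arrows, and then identifies the resulting statement with the classical maximality condition of Section \ref{sec_extension_problem}. Your only additions are cosmetic: you spell out why the compatibility holds (composition with isomorphisms, with fullness keeping everything inside $\mathcal{E}_1$), a fact the paper states beforehand and simply cites, and you package the intermediate condition as an explicit ``common combinatorial statement,'' which is implicit in the paper's argument.
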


If we try to do a similar analysis with the universality condition we find a problem. Indeed, we should replace the relation \emph{``there exists a monomorphism between $X$ and $Y$''} with the relation \emph{``there exists a \textbf{unique} morphism between $X$ and $Y$''}. As we saw, the first one is a partial relation which behaves very well relative to isomorphisms. But the second one is typically not a partial relation precisely because isomorphisms are involved. Actually, it defines a partial relation in $\mathbf{C}$ iff $\mathbf{C}$ is a category without isomorphisms\footnote{A category without isomorphisms is also called a \textit{gaunt category}. They appear in abstract homotopy theory and higher category theory as the class of categories whose simplicial nerve defines a complete Segal space \cite{gaunt_category}.}, i.e, if a morphism $f:X\rightarrow Y$ is not an isomorphism unless $Y=X$ and $f=id_X$. But this is a very strong assumption. For instance, it implies that the automorphism group $\operatorname{Aut}(X)$ is trivial for every $X\in\mathbf{C}$. Furthermore,  $\mathbf{C}$ cannot be the codomain of a functor $F:\mathbf{D}\rightarrow \mathbf{C}$ where  $\mathbf{D}$ has at least one isomorphism $f:X\rightarrow Y$ with $F(X)\neq F(Y)$. Even so, if we assume this hypothesis we can apply the discussion to the universality condition.

Thus, let $\mathbf{C}_0$ and $\mathbf{C}_1$ be full subcategories $\mathbf{C}$ and assume that $\mathbf{C}_1$ is without isomorphisms. Let $\leqq$ be the partial relation in $\operatorname{Ob}(\mathbf{C}_1)$ such that $X\leqq Y$ iff there exists a unique morphism $X\rightarrow Y$. Since the full subcategories are without isomorphisms, we have $\operatorname{Ob}(\mathbf{C}_1)\simeq  \operatorname{Iso}(\mathbf{C}_1)$, with $i=0,1$ so that $\leqq$ trivially pass to quotient and define another partial relation $\lessapprox$. We say that $\mathbf{C}_0$ is \textit{internally universal up to isomorphisms} to $\mathbf{C}_1$ if there is a map on objects $f:\operatorname{Ob}(\mathbf{C}_0)\rightarrow \operatorname{Ob}(\mathbf{C}_1)$ such that $\operatorname{img}(f)\subset \operatorname{Ob}(\mathbf{C}_1)$ has a greatest element when considered as a subset of $(\operatorname{Iso}(\mathbf{C}_1),\lessapprox)$. Since $\mathbf{C}_1$ is without isomorphisms, it happens that $\operatorname{img}(f)$ has a greatest element in $(\operatorname{Ob}(\mathbf{C}_1),\leqq)$.

\begin{universality}[categorical form]
Given two full subcategories $\mathcal{E}_0$ and $\mathcal{E}_1$ of $\mathbf{Ext}(S^G;\hat{G})$, where the second one is without isomorphisms, under which conditions $\mathcal{E}_0$ is internally universal up to isomorphisms to $\mathcal{E}_1$?
\end{universality}
 
Under the hypothesis that $\mathcal{E}_1$ is without isomorphisms, one can check that Proposition \ref{prop_maximality} remains true if we replace ``maximality'' with ``universality''. More precisely, one can check that the solutions of the above problem correspond precisely to extensions $\hat{S}^{\hat{G}}_0 \in \mathcal{E}_0$ such that there exists a map $\mu:\mathcal{E}_0\rightarrow \mathcal{E}_1$ with the property that for each other extension $\hat{S}^{\hat{G}}$ there exists a unique morphism $\mu(\hat{S}^{\hat{G}})\rightarrow \mu(\hat{S}^{\hat{G}}_0)$.

\begin{remark}
Since by Proposition \ref{prop_maximality}  maximality and maximality up to isomorphisms are equivalent conditions, in the following we will refer to both as ``\textit{maximality}''. Since universality and universality up to isomorphisms are conditions which are not comparable when $\mathcal{E}_1$ is not without isomorphisms, we will mantain the distinction between them.
\end{remark}

\section{Conditions for  Maximality and Universality}\label{sec_maximality_universality}

\quad\;\, After the prevous categorical characterizations of the fundamental problems involving extensions of YMT theories, we are ready to try to solve them. Here we will present some sufficient conditions to ensure maximality or universality between classes of extensions of YMT theories.  We will first show that if we are analyzing universality internal to some class of extensions which contains the null-type extensions of Remark \ref{remark_null_type}, then universality is immediate in both categorical senses, i.e, weak universality and universality up to isomorphisms.

\begin{theorem}[Theorem A]\label{thm_universality_dense}
Let $\mathcal{E}_1$ be a class of extensions of $S^G$ relative to $\imath:G\hookrightarrow \hat{G}$.
\begin{enumerate}
    \item  If it contains the null-type extension of Remark \ref{remark_null_type} or any extension isomorphic to it, then each other class $\mathcal{E}_0$ is internally universal and weakly universal (in particular maximal and weakly maximal) to $\mathcal{E}_1$;
    \item If in addition $\mathcal{E}_1$ is a category without isomorphisms, then every $\mathcal{E}_0$ is also universal up to isomorphisms.
\end{enumerate}
\end{theorem}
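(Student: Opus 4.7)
The plan is to exploit Remark \ref{remark_null_type}, which states that the null-type extension $\hat{S}_0^{\hat{G}}$ is the terminal object of $\mathbf{Ext}(S^G;\hat{G})$. Since terminal objects are preserved under isomorphism and inherited by full subcategories containing them, any extension $T\in\mathcal{E}_1$ isomorphic to $\hat{S}_0^{\hat{G}}$ will be terminal in $\mathcal{E}_1$ regarded as a full subcategory of $\mathbf{Ext}(S^G;\hat{G})$. All three statements then reduce to the universal property of $T$: for every $X\in\mathbf{Ext}(S^G;\hat{G})$ there exists a unique morphism $X\to T$.

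For weak universality, I would define the constant functor $\mu:\mathcal{E}_0\to\mathcal{E}_1$ sending every object to $T$ and every morphism to $\operatorname{id}_T$. Given any other functor $\nu:\mathcal{E}_0\to\mathcal{E}_1$, the unique morphisms $u_X:\nu(X)\to T$ guaranteed by terminality assemble into a natural transformation $u:\nu\Rightarrow\mu$; naturality of each square is automatic, since both composite arrows $\nu(X)\to T$ agree by uniqueness, and $u$ itself is unique for the same reason. This yields both weak universality and, a fortiori, weak maximality. For internal universality in the sense of Subsection \ref{sec_extension_problem}, I would pick $\mu$ to be the constant map at $T$ on underlying objects; then for every $\hat{S}^{\hat{G}}\in\mathcal{E}_0$ and every other map $\nu:\mathcal{E}_0\to\mathcal{E}_1$, terminality provides a unique morphism $\nu(\hat{S}^{\hat{G}})\to T=\mu(\hat{S}^{\hat{G}})$, which is exactly the required unique witness that $\mu(\hat{S}^{\hat{G}})$ extends $\nu(\hat{S}^{\hat{G}})$.

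For part (2), under the additional hypothesis that $\mathcal{E}_1$ has no isomorphisms, I would take $f:\operatorname{Ob}(\mathcal{E}_0)\to\operatorname{Ob}(\mathcal{E}_1)$ to be the constant map at $T$, so that $\operatorname{img}(f)=\{T\}$. Since the unique endomorphism $T\to T$ is $\operatorname{id}_T$, the class $[T]$ is trivially the greatest element of $\operatorname{img}(f)$ inside $(\operatorname{Iso}(\mathcal{E}_1),\lessapprox)$, giving internal universality up to isomorphisms. No step poses a real obstacle: the theorem is essentially a packaging of the universal property of a terminal object in three different guises, and the only checks to be made are the functoriality of the constant assignment and the transfer of terminality to full subcategories, both of which are formal.
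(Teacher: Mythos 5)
Your proof is correct and follows essentially the same route as the paper: both rest on the null-type extension being terminal in $\mathbf{Ext}(S^G;\hat{G})$ (Remark \ref{remark_null_type}), on terminality passing to the full subcategory $\mathcal{E}_1$ and to isomorphic objects, and on the resulting unique morphisms into it witnessing all the universality notions at once. The only cosmetic difference is that you exhibit the terminal object of $\operatorname{Func}(\mathcal{E}_0;\mathcal{E}_1)$ explicitly as the constant functor at $T$ and prove part (2) directly via the singleton image $\{T\}$, whereas the paper invokes pointwise limits in functor categories and deduces part (2) from part (1) through the equivalence established in Subsection \ref{maximality_up_to_isomorphisms}.
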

\begin{proof}
First of all, notice that the second assertion follows directly from the first one by recalling that, by the discussion of Section \ref{maximality_up_to_isomorphisms}, under those hypotheses universality holds iff universality up to isomorphisms holds. Thus, it is enough to work on the first assertion. We affirm that the first hypothesis implies that the functor category $\operatorname{Func}(\mathcal{E}_0;\mathcal{E}_1)$ has a terminal object. Indeed, recall that the null extension of Remark \ref{remark_trivial_extensions} is a terminal object in the category $\mathbf{Ext}(S^G;\hat{G})$ \cite{category_extension}. Since by hypothesis it belongs to $\mathcal{E}_1$, the fact that limits are reflected by fully faithful functors implies that it is actually a terminal object in $\mathcal{E}_1$. Once limits of a functor category are determined by limits of the codomain category, it follows that the null extension defines a terminal object in $\operatorname{Func}(\mathcal{E}_0;\mathcal{E}_1)$. Since objects isomorphic to terminal objects are also terminal, it follows that the conclusion above remains valid if we replace the null extension by any other isomorphic to it. Now, notice that each terminal object in that functor category is a realization of $\mathcal{E}_0$ as weakly internally universal to $\mathcal{E}_1$, proving the first assertion in the weakly universal case. For the universal case, recall that from Remark \ref{remark_empty} if $\mathcal{E}_0$ is empty, then there is nothing to do. Thus, assume $\mathcal{E}_0\neq \varnothing$. Let $\hat{S}^{\hat{G}}_0$ be an extension isomorphic to the null extension (which by hypothesis belongs to $\mathcal{E}_1$) and let $\mu:\mathcal{E}_0 \rightarrow\mathcal{E}_1$ be any map whose image contains $\hat{S}^{\hat{G}}_0$. Let $ \hat{S}^{\hat{G}}_{0,*}\in \mu^{-1}(\hat{S}^{\hat{G}}_0)$ and notice that since $\hat{S}^{\hat{G}}_0$ is a terminal object, for every other $\hat{S}^{\hat{G}}\in \mathcal{E}_0$ there exists a unique morphism $\mu(\hat{S}^{\hat{G}})\rightarrow \mu(\hat{S}^{\hat{G}}_{0,*})$, which proves universality. 
\end{proof}

Let $X$ be a set, $\mathcal{P}(X)$ its power set and $\mathcal{P}^2(X)$ be the power set of its power set. Consider a property $\rho$ in $\mathcal{P}(X)$. We say that $\rho$ is dense at $\mathcal{P}_0(X) \in \mathcal{P}^2(X)$ if there exists an element $x_0\in X$ such that for every $Y\subset X$ in $\mathcal{P}_0(X)$ the property $\rho$ holds in $Y\cup{x_0}$. If $\mathcal{P}_0(X)=\mathcal{P}(X)$ we say simply that $\rho$ is \textit{dense}.  The terminology is due to the following fact: let $\mathcal{P}_\rho(X)$ be the subset of those subsets of $X$ in which property $\rho$ is verified. For each $x_0\in X$ consider the map $\cup _{x_0}: \mathcal{P}(X) \rightarrow \mathcal{P}(X)$ such that $\cup _{x_0}Y=Y\cup {x_0}$ obtained by taking the union with ${x_0}$. Take a topology in $\mathcal{P}_\rho(X)$ for which the map $\cup _{x_0}$ is continuous. If $\mathcal{P}_0(X)$ is dense\footnote{This hypothesis can be avoided if we consider a topology in $\mathcal{P}(X)$ in which evaluation of $\cup$ at points is always dense.}, then its image by $\cup _{x_0}$ is also dense. But $\rho$ being dense at $ \mathcal{P}_0(X)$ means that this image is contained in $\mathcal{P}_{\rho}(X)$, so that $\mathcal{P}_{\rho}(X)$ is dense too.

Given a relation $\mathcal{R}$ in $\mathcal{P}(X)$, we can consider the induced property $\rho_{\mathcal{C}}$  ``\textit{being $\mathcal{R}$-related with a fixed $X_0\subset X$}''. We say that $\mathcal{R}$ is \textit{dense at $\mathcal{P}_0(X)$} (resp. \textit{dense}) if $\rho_{\mathcal{R}}$ is.

\begin{corollary}
For every YMT $S^G$ and every basic extension $\imath:G\hookrightarrow \hat{G}$, both the universality relation, the maximality relation, the weak universality relation and the weak maximality relations are all dense. The universality up to isomorphisms relation is dense in the classes of extensions whose induce full subcategories are without isomorphisms. 
\end{corollary}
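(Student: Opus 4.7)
The plan is to read the corollary as a direct packaging of Theorem A in the language of dense relations, with the witness element $x_0$ being the null-type extension $\hat{S}_0^{\hat{G}}$ of Remark \ref{remark_null_type}. Concretely, set $X=\operatorname{Ext}(S^G;\hat{G})$ and interpret each of the four relations ($\hookrightarrow_{\operatorname{max}}$, $\hookrightarrow_{\operatorname{unv}}$, and their weak categorical counterparts) as a relation on $\mathcal{P}(X)$. By the definition given just before the corollary, to prove that such a relation $\mathcal{R}$ is dense it suffices to produce a fixed $x_0\in X$ with the property that for every $\mathcal{E}_1\subset X$, the class $\mathcal{E}_1\cup\{x_0\}$ is $\mathcal{R}$-related to every other class $\mathcal{E}_0\subset X$ in the appropriate direction.

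The first step is to take $x_0=\hat{S}_0^{\hat{G}}$. For any $\mathcal{E}_1\subset X$, the enlarged class $\mathcal{E}_1\cup\{x_0\}$ contains the null-type extension by construction. Then part (1) of Theorem \ref{thm_universality_dense} applied to $\mathcal{E}_1\cup\{x_0\}$ instead of $\mathcal{E}_1$ immediately gives that every other class $\mathcal{E}_0$ is internally universal, weakly universal, internally maximal and weakly maximal to $\mathcal{E}_1\cup\{x_0\}$. This proves density of all four relations on the full power set $\mathcal{P}(X)$ without further work.

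For the last assertion, I would restrict the ambient family $\mathcal{P}_0(X)$ to those $\mathcal{E}_1$ whose associated full subcategory of $\mathbf{Ext}(S^G;\hat{G})$ is gaunt, in order to apply part (2) of Theorem \ref{thm_universality_dense}. The extra point to check is that adjoining $x_0$ preserves gauntness: since $\hat{S}_0^{\hat{G}}$ is a terminal object of $\mathbf{Ext}(S^G;\hat{G})$, any isomorphism $Y\simeq x_0$ in the full subcategory $\mathcal{E}_1\cup\{x_0\}$ would force $Y$ to be terminal as well, hence $Y\simeq x_0$ internally in $\mathcal{E}_1$ if $Y\in \mathcal{E}_1$ already, contradicting gauntness unless $Y=x_0$. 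Thus $\mathcal{E}_1\cup\{x_0\}$ remains without isomorphisms, and part (2) of Theorem \ref{thm_universality_dense} yields universality up to isomorphisms.

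The entire argument is essentially a translation, so there is no real technical obstacle beyond the bookkeeping of the density definition. The only mildly delicate point is the gauntness check in the final paragraph: one must be careful that enlarging $\mathcal{E}_1$ by a terminal object does not accidentally introduce new isomorphisms, which is guaranteed precisely because terminal objects in a gaunt category form a singleton isomorphism class.
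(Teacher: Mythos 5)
Your handling of the four relations (maximality, universality, and their weak categorical forms) is correct and is essentially the paper's own argument: take $x_0=\hat{S}_0^{\hat{G}}$ the null-type extension, note that $\mathcal{E}_1\cup\{x_0\}$ contains it by construction, and invoke part (1) of Theorem \ref{thm_universality_dense}. The genuine gap is in your final paragraph: adjoining $x_0$ does \emph{not} preserve gauntness in general, and your argument for this is a non sequitur. The problematic isomorphism $Y\simeq x_0$ with $Y\in\mathcal{E}_1$ and $Y\neq x_0$ lives in the enlarged full subcategory, not in $\mathcal{E}_1$ (where $x_0$ need not be an object), so it cannot ``contradict gauntness of $\mathcal{E}_1$''. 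Concretely, an isomorphism in $\mathbf{Ext}(S^G;\hat{G})$ only requires equivariant bijections making diagram (\ref{diagram_equivalence_extensions}) commute, so the isomorphism class of the null extension need not be a singleton; this is precisely why Theorem \ref{thm_universality_dense} is phrased ``or any extension isomorphic to it''. Taking $\mathcal{E}_1=\{Y\}$ with $Y\simeq x_0$ but $Y\neq x_0$, the class $\mathcal{E}_1$ is gaunt (its only hom-set is the singleton $\operatorname{Hom}(Y,Y)$, since $Y$ is terminal), yet $\mathcal{E}_1\cup\{x_0\}$ contains two distinct isomorphic objects and is therefore not without isomorphisms; part (2) of Theorem \ref{thm_universality_dense} does not apply to it, and indeed ``universal up to isomorphisms'' is not even defined for a non-gaunt ambient class. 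Your closing remark that terminal objects in a gaunt category form a singleton isomorphism class is true but irrelevant: it governs isomorphisms internal to $\mathcal{E}_1$, not those created by the adjunction of $x_0$.

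The repair is to use the paper's second, equivalent formulation of density stated in the Introduction: a property is dense (relative to a family $\mathcal{P}_0(X)$) iff there exists $x_0\in X$ such that the property holds in every member of $\mathcal{P}_0(X)$ that \emph{already contains} $x_0$. Under that reading the second sentence of the corollary is immediate and matches the paper's one-line proof: if a class is without isomorphisms and contains the null extension, then part (2) of Theorem \ref{thm_universality_dense} applies directly, and no preservation-of-gauntness step arises at all. (Alternatively, one could keep your formulation but restrict to gaunt classes containing no object of the isomorphism class of $x_0$ other than possibly $x_0$ itself; but this is a strictly stronger hypothesis than what the corollary asserts.)
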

\begin{proof}
Straightforward from the above discussion and from Theorem \ref{thm_universality_dense}.
\end{proof}

\subsection{Zorn's Lemma Approach}\label{sec_zorn}

\quad\;\,Let us now give a condition for maximality and universality even in the case when a class of extensions $\mathcal{E}$ does not contains an object isomorphic to the null extension. Recalling that maximality and universality up to isomorphism conditions are about existence of greatest elements, the main idea is to use Zorn's lemma. This is precisely what we will do.

Given a set of indexes $I$ and a class of extensions $\mathcal{E}$, let $\mathcal{E}(I)$ denote the union of all subclasses $X\subset \mathcal{E}$ such that $\vert X \vert \leq \vert I\vert $. Then this is another subclass of extensions of $\mathcal{E}$. We say that $\mathcal{E}$ is $I$-\textit{coherent for maximality} if it has $J$-coproducts for every $J\subset I$ and if the partial order $\lesssim$ of Subsection \ref{maximality_up_to_isomorphisms} is actually a total order when restricted to $\mathcal{E}(I)$. We say that it is \textit{coherent for maximality} if it is $I$-coherent for $I$ containing the class of objects of $\mathcal{E}$. Similarly, we say that it is $I$-\textit{coherent for universality up to isomorphisms} if it has $J$-coproducts, $\mathcal{E}$ is without isomorphisms  and the partial order $\lessapprox$ is a total order.

\begin{example}\emph{
If $\mathcal{E}$ is a coherent category in the sense of \cite{elephant}, then it is clearly coherent for maximality. The product and the slice of coherent categories are coherent categories. The category of sets is coherent. By Theorem 5.1 of \cite{category_extension}, every $\mathcal{E}$ is a conservative subcategory of a product of slices of $\mathbf{Set}$, so that it is a conservative subcategory of a coherent category. Thus, under additional requirements it can itself be made coherent \cite{elephant}.}
\end{example}

\begin{remark}
Notice that if we allow $J=\varnothing$ in the definition of $I$-coherence above, then this implies that the category have initial objects, which is not covered by Theorem \ref{thm_universality_dense}, since there we assumed existence of terminal objects.
\end{remark}

\begin{theorem}[Theorem B]\label{thm_zorn}
Let $\mathcal{E}_1$ be a class of extensions as above. If it is $I$-coherent for maximality (resp. $I$-coherent for universality up to isomorphisms), then every $\mathcal{E}_0$ such that $\vert \mathcal{E}_0\vert \geq \vert \mathcal{E}_1(I) \vert$ is internally maximal (resp. universal up to isomorphisms) to $\mathcal{E}_1(I)$.    
\end{theorem}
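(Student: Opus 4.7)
The plan is to reduce both statements to the production of a greatest (resp.\ greatest-with-unique-arrows) object in $\mathcal{E}_1(I)$ and then to use the cardinality hypothesis to build a surjective object map. By Proposition \ref{prop_maximality}, internal maximality of $\mathcal{E}_0$ to $\mathcal{E}_1(I)$ amounts to exhibiting a map of objects $f:\operatorname{Ob}(\mathcal{E}_0)\rightarrow \operatorname{Ob}(\mathcal{E}_1(I))$ whose image, viewed as a full subcategory, has a greatest element under $\lesssim$. The hypothesis $\vert\mathcal{E}_0\vert \geq \vert \mathcal{E}_1(I)\vert$ lets me choose $f$ to be surjective on objects, so that $\operatorname{img}(f)=\mathcal{E}_1(I)$; hence everything reduces to showing that $(\operatorname{Iso}(\mathcal{E}_1(I)),\lesssim)$ itself admits a greatest element.

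For this I would apply Zorn's Lemma to the totally ordered set $(\operatorname{Iso}(\mathcal{E}_1(I)),\lesssim)$, which is a total order by the $I$-coherence assumption. Given a chain $\mathcal{C}$, I first index it by some $J\subset I$, pick representatives $\{\hat{S}_j^{\hat{G}}\}_{j\in J}\subset \mathcal{E}_1$, and form the coproduct $\coprod_{j\in J}\hat{S}_j^{\hat{G}}$, which exists inside $\mathcal{E}_1$ by the $J$-coproduct clause of $I$-coherence. The canonical coproduct insertions $\hat{S}_j^{\hat{G}}\rightarrow \coprod_{j\in J}\hat{S}_j^{\hat{G}}$ immediately exhibit the isomorphism class of the coproduct as an upper bound of $\mathcal{C}$ under $\lesssim$. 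Zorn then produces a maximal element $[\hat{S}_0^{\hat{G}}]$, and in a total order every maximal element is a greatest one.

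With $[\hat{S}_0^{\hat{G}}]$ in hand I combine it with the surjection $f$ to conclude that $\operatorname{img}(f)=\mathcal{E}_1(I)$ is a greatest subcategory, which by Proposition \ref{prop_maximality} is precisely internal maximality of $\mathcal{E}_0$ to $\mathcal{E}_1(I)$. The universality clause follows by running the same script with $\lesssim$ replaced by the finer relation $\lessapprox$ of Subsection \ref{maximality_up_to_isomorphisms}: since $\mathcal{E}_1$ is now assumed without isomorphisms, $\lessapprox$ is a genuine partial order on $\operatorname{Iso}(\mathcal{E}_1(I))=\operatorname{Ob}(\mathcal{E}_1(I))$; the $I$-coherence for universality makes it total; coproducts still provide upper bounds, with uniqueness of the comparing morphisms coming from the universal property of the coproduct together with gauntness of $\mathcal{E}_1$; and Zorn once again delivers the greatest element. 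The step I expect to be the main obstacle is the indexing maneuver: checking that an arbitrary chain in $(\operatorname{Iso}(\mathcal{E}_1(I)),\lesssim)$ can always be realized, possibly after passing to a cofinal refinement, as a $J$-indexed family with $J\subset I$, so that the $I$-coherence hypothesis genuinely supplies upper bounds for every chain that Zorn will feed me. Once that indexing point is pinned down, the remaining coproduct and surjection constructions are routine.
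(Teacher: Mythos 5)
Your proposal is correct and follows essentially the same route as the paper's own proof: reduce via Proposition \ref{prop_maximality} to producing a greatest element of $(\operatorname{Iso}(\mathcal{E}_1(I)),\lesssim)$, bound chains by the coproducts $\coprod_{x\in X}x$ supplied by $I$-coherence, apply Zorn's Lemma (a maximal element of a total order being greatest), and use the cardinality hypothesis together with the Axiom of Choice to obtain a surjection $\mathcal{E}_0\rightarrow\mathcal{E}_1(I)$ whose fiber over the greatest element witnesses maximality, with the universality clause run in parallel for $\lessapprox$. The indexing maneuver you flag as the main obstacle is exactly what the paper's definition of $\mathcal{E}_1(I)$ is meant to absorb: by construction any subset (hence any chain) of $\mathcal{E}_1(I)$ that Zorn feeds you has cardinality at most $\vert I\vert$, so it is bounded above by a $J$-coproduct with $J\subset I$ and no cofinal refinement is needed.
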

\begin{proof}
Consider first the case $\mathcal{E}_0=\operatorname{Ob}(\mathbf{Ext}(S^G;\hat{G}))$.  By the closeness under $J$-coproducts, with $J\subset I$, it follows that each subset $X$ of $\mathcal{E}_1$ such that $\vert X \vert \leq \vert I\vert$ is bounded from above. Indeed, let $b(X)=\coprod_{x\in X}x$, so that for every $x\in X$ we have the canonical morphism $x\rightarrow b(X)$ and therefore $x\lesssim b(X)$. Thus, each subset of  $\mathcal{E}(I)$ is bounded from above. The total order hypothesis on $\lesssim$ implies that each subset is a chain. Therefore, by Zorn's lemma we see that $\mathcal{E}_1(I)$ has a maximal element which is actually a greatest element $\hat{S}^{\hat{G}}_0$. By the Axiom of Choice there exists a function $r:\operatorname{Ob}(\mathbf{Ext}(S^G;\hat{G}))\rightarrow \mathcal{E}_1(I)$ which is a retraction for the reciprocal inclusion $\mathcal{E}_1 \hookrightarrow \operatorname{Ob}(\mathbf{Ext}(S^G;\hat{G}))$ and therefore a surjective map. Since $r$ is surjective, $r^{-1}(\hat{S}^{\hat{G}}_0)$ is nonempty and each element on it satisfies the desired maximality or universality up to isomorphism condition. Finally, note that the same argument holds if $\mathcal{E}_0$ is any class in which $\mathcal{E}_1(I)$ can be embedded, i.e, such that $\vert \mathcal{E}_0 \vert \leq \vert \mathcal{E}_1(I) \vert$.
\end{proof}
\begin{remark}
We can recover a version of Theorem \ref{thm_universality_dense} from the last theorem. Indeed, since $I$-coproducts with $I=\varnothing$ is the same as an initial object and since the null-extension is actually a null object in the category of extensions, it follows that $\mathcal{E}(\varnothing)\simeq *$ is the class containing the null extension. Suppose $\mathcal{E}_0$ nonempty, so that $\vert \mathcal{E}_0\vert \geq 1 = \vert \mathcal{E}(\varnothing)\vert$. Thus, by the last theorem, $\mathcal{E}_0$ is internally maximal to $\mathcal{E}_1(\varnothing)$. But note that if a class $\mathcal{E}_0$ is internally maximal to $\mathcal{E}_1$, then it is also internally maximal to any class containing $\mathcal{E}_1$. Therefore, we conclude that every $\mathcal{E}_0$ is internally maximal to any class containing the null extension, which is essentially Theorem \ref{thm_universality_dense}.
\end{remark}

\section{A Concrete Situation}\label{sec_concrete_situation}

\quad\;\, In this section we will analyze the four problems of Section \ref{sec_extension_problem} for a concrete class of extensions. More precisely, we prove the following, which  basically constitutes Theorem C:
\begin{enumerate}
    \item for every group $G$, every YMT theory $S^G$ and every basic extension $\imath :G\hookrightarrow \hat{G}$, the class $\operatorname{Inj}(S^G;\hat{G})$ of \emph{$\operatorname{Gau}_{\hat{G}}(\hat{P})$-injective} extensions of $S^G$ (defined below) is such that the existence and the obstruction problem have a complete and positive solution;
    \item there is a subclass $\operatorname{Coh}(S^G;\hat{G})$ of $\operatorname{Inj}(S^G;\hat{G})$ consisting of the so-called \textit{coherent extensions} which are $I$-coherent for maximality in the sense of Section \ref{sec_zorn} for certain $I$.
\end{enumerate}

We begin by defining what we mean by an \textit{injective extension}. 
Let $*:G\times X\rightarrow X$ be the action of a group $G$ on a
set $X$, and let $f:X\rightarrow\mathbb{R}$ be a function. We say that
$f$ is \emph{$G$-invariant in a subset $X_{0}\subset X$ }(or that
$X_{0}$ is a $G$\emph{-invariant subset for $f$})\emph{ }if $X_{0}$
is $G$-invariant and $f(g*x)=f(x)$ for every $x\in X_{0}$. This
means that $f$ factors as in the diagram below. We say that $X_{0}$ is an \emph{injective
$G$-invariant subset for $f$} if the induced map $\overline{f}_{0}$
is injective. Finally, we recall that if $X_{0}/G$ is the orbit space of a group
action, a \emph{gauge fixing} for the underlying action is a global
section for the projection map $\pi:X_{0}\rightarrow X_{0}/G$, i.e,
it is a function $\sigma:X_{0}/G\rightarrow X_{0}$ such that $\pi(\sigma(x))=x$
for every $x\in X_{0}$. In the category of sets, gauge fixings always
exists. This is a consequence of the Axiom of Choice, which states
that every surjection of sets has a section. When the group action
is implicit, we say simply that $\sigma$ is a \emph{gauge fixing
in $X_{0}$}.

$$
\xymatrix{\ar@{^(->}[d] X_0 \ar[rr]^{\pi} && X_0/G \ar@{_(->}[d] \ar@{-->}[rd]^{\overline{f}_0}\\
\ar@/_{0.5cm}/[rrr]_{f} X \ar[rr]_{\pi} && X/G & \mathbb{R} }
$$ 

Let $S^G$ be a YMT theory and let $\hat{S}^{\hat{G}}$ be an extension of $S^G$. We say that $\hat{S}^{\hat{G}}$ is \textit{injective} if its domain $\widehat{\operatorname{Conn}}(\hat{P};\hat{\mathfrak{g}})$ is an injective $\operatorname{Gau}_{\hat{G}}(\hat{P})$-invariant subset for $\hat{S}^{\hat{G}}$. The class of all of them will be denoted by $\operatorname{Inj}(S^G;\hat{G})$. The obstruction problem for this class is characterized by the following lemma, which will be proved in Subsection \ref{universal_extensions}.

\begin{lemma} \label{thm_complete_extension}
Let $S^{G}$ be a YMT theory with instanton sector $P$, let $\imath:G\hookrightarrow\hat{G}$
be a basic extension of $G$ and let $\operatorname{Conn}(\hat{P};\hat{\mathfrak{g}})\subset X_{0}\subset\Omega^{1}(\hat{P};\hat{\mathfrak{g}})$
be an injective $\operatorname{Gau}_{\hat{G}}(\hat{P})$-invariant
subset for a map $S:\Omega^{1}(\hat{P};\hat{\mathfrak{g}})\rightarrow\mathbb{R}$.
Then
\begin{enumerate}
\item Each gauge fixing $\sigma$ in $\operatorname{Conn}(\hat{P};\hat{\mathfrak{g}})$
induces a representation of the restriction $S_{0}:X_{0}\rightarrow\mathbb{R}$
of $S$ to $X_{0}$ as a complete extension of $S^{G}$ on a certain
subspace $X_{0}(\sigma)$;
\item the subspace $X_{0}(\sigma)$ is independent of $\sigma$, i.e, for
every two gauge fixings $\sigma,\sigma'$ there exists a bijection
$X_{0}(\sigma)\simeq X_{0}(\sigma')$.
\end{enumerate}
\end{lemma}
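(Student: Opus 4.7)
The plan is to construct $X_0(\sigma)$ as a set of orbit representatives selected by $\sigma$ inside $\operatorname{Conn}(\hat{P};\hat{\mathfrak{g}}) \subset X_0$, restricted to those orbits whose $S$-value is reachable by $S^G$, and then to build the factoring map $\delta$ by inverting $S^G$ via a section produced by the Axiom of Choice. First I would use the hypothesis that $X_0$ is an injective $\operatorname{Gau}_{\hat{G}}(\hat{P})$-invariant subset for $S$ to obtain the factorization $S_0 := S|_{X_0} = \overline{S}_0 \circ \pi$, where $\pi: X_0 \rightarrow X_0/\operatorname{Gau}_{\hat{G}}(\hat{P})$ is the orbit projection and $\overline{S}_0$ is injective. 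This incidentally exhibits $S_0$ as a $\operatorname{Gau}_{\hat{G}}(\hat{P})$-invariant functional, as required of an extended action in Section \ref{sec_extension}.

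For the first assertion, I would set $\mathcal{A}_\sigma := \sigma(\operatorname{Conn}(\hat{P};\hat{\mathfrak{g}})/\operatorname{Gau}_{\hat{G}}(\hat{P}))$, which by construction selects exactly one connection in each $\operatorname{Gau}_{\hat{G}}(\hat{P})$-orbit on $\operatorname{Conn}(\hat{P};\hat{\mathfrak{g}})$, and define the candidate correction subspace as $X_0(\sigma) := \mathcal{A}_\sigma \cap S^{-1}(\operatorname{img}(S^G))$, a subset of $\operatorname{Conn}(\hat{P};\hat{\mathfrak{g}}) \subset X_0$. Applying the Axiom of Choice to the surjection $S^G: \operatorname{Conn}(P;\mathfrak{g}) \twoheadrightarrow \operatorname{img}(S^G)$ produces a section $\tau$, and setting $\delta := \tau \circ S|_{X_0(\sigma)}$ yields the identity $S_0|_{X_0(\sigma)} = S^G \circ \delta$ by construction. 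Together with the choice $C \equiv 0$, this realizes the tuple $(X_0, S_0, X_0(\sigma), \delta)$ as a complete extension of $S^G$ in the sense of Section \ref{sec_extension}.

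For the second assertion, I would exploit the fact that $\sigma$ is injective (as any section of the orbit projection is) to restrict it to a bijection between the orbit subset $\mathcal{O} := \{[\alpha] \in \operatorname{Conn}(\hat{P};\hat{\mathfrak{g}})/\operatorname{Gau}_{\hat{G}}(\hat{P}) : \overline{S}_0([\alpha]) \in \operatorname{img}(S^G)\}$ and $X_0(\sigma)$, using that $S$ is constant on orbits of $X_0$ so that membership in $S^{-1}(\operatorname{img}(S^G))$ is an orbit-level condition. Since $\mathcal{O}$ depends only on $S$ and $S^G$ and not on $\sigma$, the same construction applied to any other gauge fixing $\sigma'$ gives $X_0(\sigma') \simeq \mathcal{O} \simeq X_0(\sigma)$, as required.

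The step I expect to be the main obstacle is verifying that $X_0(\sigma)$ meets every formal requirement of a correction subspace, in particular the condition $0 \in X_0(\sigma)$ buried in the definition. This forces $\sigma$ to map the orbit of the zero 1-form to $0$ itself and requires $S(0) \in \operatorname{img}(S^G)$; either these must be imposed as background hypotheses or the choice of $\sigma$ must be refined so that its image always contains $0$, which is harmless because gauge fixings can be prescribed pointwise via choice. Any tacit regularity or $\xi$-equivariance condition on $\delta$ that the paper's definition of an extension might impose is a secondary concern, but in the set-theoretic framework of Section \ref{sec_extension} it is automatically handled by the use of a choice section.
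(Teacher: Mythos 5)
Your set-theoretic skeleton matches the paper's construction more closely than you might expect: unwinding the paper's categorical pullback $\operatorname{Pb}(S^G,\overline{S}_{0,\jmath})$ in $\mathbf{Set}$, its image inside $X_0$ under $\jmath\circ\sigma\circ\overline{\jmath}_0$ is exactly your $\mathcal{A}_\sigma\cap S^{-1}(\operatorname{img}(S^G))$, and your proof of assertion (2) --- both $X_0(\sigma)$ and $X_0(\sigma')$ biject with the $\sigma$-independent set $\mathcal{O}$ --- is the same argument the paper runs, with $\operatorname{Pb}(S^G,\overline{S}_{0,\jmath})$ playing the role of your $\mathcal{O}$. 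The verification $S_0\vert_{X_0(\sigma)}=S^G\circ\delta$ with $C\equiv 0$ is also correct for your $\delta=\tau\circ S$, and your explicit Axiom-of-Choice section $\tau$ of $S^G$ onto its image even makes concrete an identification the paper obtains more slickly from its pullback diagram.

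There is, however, a genuine gap, and it sits exactly where the paper spends the entire second half of its proof of assertion (1). Having obtained $S_0\vert_{X_0(\sigma)}=S^G\circ\delta$, the paper does \emph{not} declare victory: it states explicitly that $S_0$ is a complete extension on the correction subspace $X_0(\sigma)$ only if $X_0(\sigma)$ is suitably invariant under $\operatorname{Gau}_G(P)$ (acting through the homomorphism $\xi:\operatorname{Gau}_G(P)\rightarrow\operatorname{Gau}_{\hat{G}}(\hat{P})$ of Section \ref{sec_extension}), and it proves this by invoking the universality of the pullback to construct an action $*:\operatorname{Gau}_G(P)\times\operatorname{Pb}(S^G,\overline{S}_{0,\jmath})\rightarrow\operatorname{Pb}(S^G,\overline{S}_{0,\jmath})$ making $\delta$ equivariant, i.e.\ fitting into the square with $(-)^*\circ(\operatorname{id}\times\delta)$. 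You dismiss precisely this point in your closing sentence, asserting that any $\xi$-equivariance condition on $\delta$ is ``automatically handled by the use of a choice section.'' It is not: a section $\tau$ produced by the Axiom of Choice carries no equivariance whatsoever, so $\delta=\tau\circ S$ will in general intertwine nothing; the canonicity of the paper's $\delta$ (a pullback projection rather than an arbitrary choice-section composite) is exactly what allows the universality argument to go through. To repair your proof you must either replace your $\delta$ by the pullback projection and reproduce the paper's equivariance diagram, or construct by hand a $\operatorname{Gau}_G(P)$-action on your $X_0(\sigma)$ for which your $\tau$-built $\delta$ is equivariant --- which an arbitrary $\tau$ does not supply. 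By contrast, the obstacle you flag as the main one, namely $0\in X_0(\sigma)$, is misplaced: the paper's own correction subspace satisfies $X_0(\sigma)\subset\operatorname{Conn}(\hat{P};\hat{\mathfrak{g}})$, and the zero $1$-form is never a connection, so the authors evidently read the condition $0\subseteq C^1(\hat{P};\hat{\mathfrak{g}})\subseteq\widehat{\operatorname{Conn}}(\hat{P};\hat{\mathfrak{g}})$ loosely (as bounds on where $C^1$ may sit, not as a pointed or linear requirement), and no refinement of $\sigma$ is needed.
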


Now, let us consider the class of coherent extensions. We say $\hat{S}^{\hat{G}}$ is
 \begin{enumerate}
     \item of \emph{pullback type} if it is of the form of the last theorem, i.e, if it is complete, $\widehat{\operatorname{Conn}}(\hat{P};\hat{\mathfrak{g}})=X_0$ and there exists a gauge fixing $\sigma$ such that $$ C^1(\hat{P};\hat{\mathfrak{g}})\subset X_0(\sigma)\simeq \operatorname{Pb}(S^G,\overline{S}_{0,\jmath}).
     $$The class of all of them will be denoted by $\operatorname{Pb}(S^G;\hat{G})$;
      \item \textit{small} (or that it has has a \emph{small correction subspace}) if $C^1(\hat{P};\hat{\mathfrak{g}})$ is actually contained in $\operatorname{Conn}(\hat{P};\hat{\mathfrak{g}})$. The class of these extensions will be denoted by $\operatorname{Small}(S^G;\hat{G})$.
     \item \textit{coherent} if it is injective and of pullback-type. The class of all of them will be denoted by $\operatorname{Coh}(S^G;\hat{G})$, so that $\operatorname{Coh}(S^G;\hat{G})=\operatorname{Inj}(S^G;\hat{G})\cap \operatorname{Pb}(S^G;\hat{G})$;
     \item \textit{coherently small} if it is coherent and small. The class of them will be denoted by $\operatorname{SCoh}(S^G;\hat{G})$, so that  $\operatorname{SCoh}(S^G;\hat{G})=\operatorname{Coh}(S^G;\hat{G})\cap \operatorname{Small}(S^G;\hat{G})$
  \end{enumerate}
  
With this nomenclature, Lemma \ref{thm_complete_extension} has the following meaning:
\begin{corollary}
For every YMT $S^G$ and every basic extension $\imath :G\hookrightarrow \hat{G}$, the only obstruction to belonging to $\operatorname{Coh}(S^G;\hat{G})$ is belonging to $\operatorname{Inj}(S^G;\hat{G})$.
\end{corollary}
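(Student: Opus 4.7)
The plan is to derive the corollary directly from Lemma \ref{thm_complete_extension}, with essentially no additional work beyond unpacking definitions. One direction is built into the definition: since $\operatorname{Coh}(S^G;\hat{G})=\operatorname{Inj}(S^G;\hat{G})\cap\operatorname{Pb}(S^G;\hat{G})$, every coherent extension is in particular injective, so injectivity is a necessary condition. The content is the reverse implication: any extension whose data satisfies the injectivity hypothesis can be realized as a coherent extension.

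First, I would start with an arbitrary $\hat{S}^{\hat{G}}\in\operatorname{Inj}(S^G;\hat{G})$. By definition, $\widehat{\operatorname{Conn}}(\hat{P};\hat{\mathfrak{g}})$ is an injective $\operatorname{Gau}_{\hat{G}}(\hat{P})$-invariant subset for $\hat{S}^{\hat{G}}$. Setting $X_0:=\widehat{\operatorname{Conn}}(\hat{P};\hat{\mathfrak{g}})$ and extending $\hat{S}^{\hat{G}}$ in any $\operatorname{Gau}_{\hat{G}}(\hat{P})$-invariant way to a functional $S:\Omega^1(\hat{P};\hat{\mathfrak{g}})\to\mathbb{R}$ (for instance, by letting it be constant on every orbit outside $X_0$) places us exactly in the hypotheses of Lemma \ref{thm_complete_extension}.

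Next, invoke the Axiom of Choice to pick a gauge fixing $\sigma$ on $\operatorname{Conn}(\hat{P};\hat{\mathfrak{g}})$ and apply the lemma. This produces a subspace $X_0(\sigma)\subseteq X_0$, a map $\delta$, and a decomposition $\hat{S}^{\hat{G}}|_{X_0(\sigma)}=S^G\circ\delta$ with vanishing correction term, together with an identification $X_0(\sigma)\simeq\operatorname{Pb}(S^G,\overline{S}_{0,\jmath})$. Re-equipping $\hat{S}^{\hat{G}}$ with the correction data $C^1(\hat{P};\hat{\mathfrak{g}}):=X_0(\sigma)$, $C\equiv 0$, and the $\delta$ just constructed makes it visibly an extension of pullback type, so $\hat{S}^{\hat{G}}\in\operatorname{Pb}(S^G;\hat{G})$. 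Combined with the standing injectivity hypothesis, this places $\hat{S}^{\hat{G}}$ in $\operatorname{Coh}(S^G;\hat{G})$.

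I do not foresee a serious obstacle, since the heavy lifting is entirely encapsulated in Lemma \ref{thm_complete_extension}. The only subtlety is interpretational: the corollary should be read as saying that whenever the pair $(\widehat{\operatorname{Conn}}(\hat{P};\hat{\mathfrak{g}}),\hat{S}^{\hat{G}})$ satisfies injectivity, one can choose correction data on it that witnesses coherence. The correction data originally carried by $\hat{S}^{\hat{G}}$ need not itself be of pullback type, but this is immaterial for the claim, because $\operatorname{Inj}(S^G;\hat{G})$ is a condition on the extended domain and functional alone, while pullback-type status depends on a choice of $(C^1,C,\delta)$ that is always available once injectivity is in place.
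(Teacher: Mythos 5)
Your proposal is correct and takes essentially the same route as the paper, which presents this corollary as nothing more than Lemma \ref{thm_complete_extension} restated in the nomenclature of injective, pullback-type and coherent extensions: one inclusion is definitional since $\operatorname{Coh}(S^G;\hat{G})=\operatorname{Inj}(S^G;\hat{G})\cap\operatorname{Pb}(S^G;\hat{G})$, and the converse is the lemma applied with $X_0=\widehat{\operatorname{Conn}}(\hat{P};\hat{\mathfrak{g}})$. Your additional care---extending $\hat{S}^{\hat{G}}$ invariantly to $\Omega^{1}(\hat{P};\hat{\mathfrak{g}})$ so as to meet the lemma's hypotheses literally, and re-equipping the extension with the correction data $\bigl(C^{1}(\hat{P};\hat{\mathfrak{g}}),C,\delta\bigr)=\bigl(X_0(\sigma),0,\delta\bigr)$ furnished by the pullback---only makes explicit what the paper leaves implicit, including the correct reading that injectivity concerns the domain and functional alone while pullback-type status is witnessed by a choice of correction data.
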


Notice that this corollary (and therefore Lemma \ref{thm_complete_extension}) is only about the obstruction problem because a priori we do not know anything about the existence problem for the class $\operatorname{Inj}(S^G;\hat{G})$ of injective extensions. But in Subsection \ref{sec_lemma_2} we will prove the following lemma, showing that to every extension one can assign a canonical injective extension.

\begin{lemma}\label{prop_injective}
There exists a canonical map  
$$\mathcal{I}: \operatorname{Ob}(\mathbf{Ext}(S^G;\hat{G}))\rightarrow \operatorname{Inj}(S^G;\hat{G})\cap \operatorname{Comp}(S^G;\hat{G})
$$
assigning to each extension $\hat{S}^{\hat{G}}$ of $S^G$ the  largest complete injective extension $\mathcal{I}(\hat{S}^{\hat{G}})$ for which there exists a monomorphism $\mathcal{I}(\hat{S}^{\hat{G}})\rightarrow \hat{S}^{\hat{G}}$. Furthermore, this map preserves small extensions.
\end{lemma}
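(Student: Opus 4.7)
The plan is to build $\mathcal{I}(\hat{S}^{\hat{G}})$ in two stages: first, using the Axiom of Choice, extract from $\hat{S}^{\hat{G}}$ a canonical injective $\operatorname{Gau}_{\hat{G}}(\hat{P})$-invariant subset $X_{0}$ of $\widehat{\operatorname{Conn}}(\hat{P};\hat{\mathfrak{g}})$ that contains (a copy of) $\operatorname{Conn}(\hat{P};\hat{\mathfrak{g}})$; second, feed $(X_{0},\hat{S}^{\hat{G}}|_{X_{0}})$ into Lemma \ref{thm_complete_extension} to produce a complete injective extension $\mathcal{I}(\hat{S}^{\hat{G}})$ together with a tautological monomorphism back to $\hat{S}^{\hat{G}}$.

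Concretely, by gauge invariance $\hat{S}^{\hat{G}}$ factors through the orbit projection $\pi$ as $\overline{\hat{S}}^{\hat{G}}\circ \pi$, and $\overline{\hat{S}}^{\hat{G}}$ factors as a surjection $q:\widehat{\operatorname{Conn}}(\hat{P};\hat{\mathfrak{g}})/\operatorname{Gau}_{\hat{G}}(\hat{P})\twoheadrightarrow \operatorname{Img}(\overline{\hat{S}}^{\hat{G}})$ followed by an inclusion into $\mathbb{R}$. The Axiom of Choice supplies a section $s$ of $q$; by picking $s(r)$ to be an orbit through $\operatorname{Conn}(\hat{P};\hat{\mathfrak{g}})$ whenever $r$ is attained on $\operatorname{Conn}(\hat{P};\hat{\mathfrak{g}})$, the set $X_{0}:=\pi^{-1}(\operatorname{Img}(s))$ is a $\operatorname{Gau}_{\hat{G}}(\hat{P})$-invariant subset on which $\hat{S}^{\hat{G}}$ separates orbits and which contains a chosen family of connection orbits. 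Lemma \ref{thm_complete_extension} applied to $(X_{0},\hat{S}^{\hat{G}}|_{X_{0}})$ with any gauge fixing $\sigma$ in $\operatorname{Conn}(\hat{P};\hat{\mathfrak{g}})$ then produces a complete injective extension of $S^G$, which I would take to be $\mathcal{I}(\hat{S}^{\hat{G}})$, with correction subspace $X_{0}(\sigma)\subseteq X_{0}$. The inclusions $X_{0}\hookrightarrow \widehat{\operatorname{Conn}}(\hat{P};\hat{\mathfrak{g}})$ together with $X_{0}(\sigma)\cap C^{-1}(0)\hookrightarrow C^{1}(\hat{P};\hat{\mathfrak{g}})$ paste into a monomorphism $\mathcal{I}(\hat{S}^{\hat{G}})\rightarrow \hat{S}^{\hat{G}}$ in $\mathbf{Ext}(S^G;\hat{G})$ via (\ref{decomposition_extension}). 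Maximality among complete injective sub-extensions is then a subobject argument: if $(f,g):\tilde{S}\rightarrow \hat{S}^{\hat{G}}$ is another monomorphism with $\tilde{S}$ complete and injective, then $f$ identifies the extended domain of $\tilde S$ with a $\operatorname{Gau}_{\hat{G}}(\hat{P})$-invariant subset of $\widehat{\operatorname{Conn}}(\hat{P};\hat{\mathfrak{g}})$ on which $\hat{S}^{\hat{G}}$ separates orbits, and its image in the orbit space corresponds to another section of $q$; modifying $s$ to absorb it yields the required factorization $\tilde{S}\rightarrow \mathcal{I}(\hat{S}^{\hat{G}})$. Preservation of smallness follows because if $C^{1}(\hat{P};\hat{\mathfrak{g}})\subseteq \operatorname{Conn}(\hat{P};\hat{\mathfrak{g}})$, then $\sigma$ may be chosen with image in $C^{1}(\hat{P};\hat{\mathfrak{g}})$, so that $X_{0}(\sigma)\subseteq \operatorname{Conn}(\hat{P};\hat{\mathfrak{g}})$ by construction.

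The hard part is the literal requirement $\operatorname{Conn}(\hat{P};\hat{\mathfrak{g}})\subseteq X_{0}$ needed to legitimately invoke Lemma \ref{thm_complete_extension}: if two distinct gauge orbits of genuine connections happen to share the same $\hat{S}^{\hat{G}}$-value, no section $s$ of $q$ can retain both of them, and the strict inclusion fails. I would sidestep this by replacing the literal subset inclusion by an abstract $\operatorname{Gau}_{\hat{G}}(\hat{P})$-equivariant injection $X_{0}\rightarrow \widehat{\operatorname{Conn}}(\hat{P};\hat{\mathfrak{g}})$, which is allowed because morphisms in $\mathbf{Ext}(S^G;\hat{G})$ are defined in (\ref{diagram_equivalence_extensions}) as pairs of equivariant functions rather than honest subset inclusions. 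A secondary subtlety is the compatibility requirement $C\circ g=0$ coming from (\ref{decomposition_extension}), which forces the correction subspace of $\mathcal{I}(\hat{S}^{\hat{G}})$ to live inside the zero-locus $C^{-1}(0)$ of the original correction functional; one must check that $X_{0}(\sigma)\cap C^{-1}(0)$ still satisfies the hypotheses of Lemma \ref{thm_complete_extension}. Finally, the canonicity of $\mathcal{I}$ is only up to the choices of $s$ and $\sigma$, which is the sharpest naturality statement available given the essential reliance on the Axiom of Choice.
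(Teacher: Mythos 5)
Your construction does not land in the stated codomain, so the central ``sidestep'' is a genuine gap rather than a technicality. Condition (\ref{domain_extension}) makes the literal containment $\operatorname{Conn}(\hat{P};\hat{\mathfrak{g}})\subseteq \widehat{\operatorname{Conn}}(\hat{P};\hat{\mathfrak{g}})$ part of the \emph{definition} of an extension, hence of membership in $\operatorname{Inj}(S^G;\hat{G})\cap\operatorname{Comp}(S^G;\hat{G})$. You correctly diagnose that your $X_0=\pi^{-1}(\operatorname{img}(s))$ cannot contain $\operatorname{Conn}(\hat{P};\hat{\mathfrak{g}})$ whenever two distinct connection orbits share an $\hat{S}^{\hat{G}}$-value, but replacing the containment by an abstract $\operatorname{Gau}_{\hat{G}}(\hat{P})$-equivariant injection changes what the \emph{objects} of $\mathbf{Ext}(S^G;\hat{G})$ are, not merely the morphisms of (\ref{diagram_equivalence_extensions}); the resulting datum is simply not an extension, so $\mathcal{I}$ is not a map into $\operatorname{Inj}(S^G;\hat{G})\cap\operatorname{Comp}(S^G;\hat{G})$. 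The paper resolves exactly this tension in the opposite direction: it keeps the containment and gives up the one-orbit-per-value picture. Its proof nullifies the correction term (completeness by fiat, no gauge fixing and no Lemma \ref{thm_complete_extension}), takes the largest $\operatorname{Gau}_{\hat{G}}(\hat{P})$-invariant subset $X'\subseteq X$ on which the functional descends injectively to the quotient, forces the extended-domain axiom by the union $X''=X'\cup\operatorname{Conn}(\hat{P};\hat{\mathfrak{g}})$, cuts the correction subspace to $C^1_{''}=C^1(\hat{P};\hat{\mathfrak{g}})\cap X''$ with restricted maps, and obtains the monomorphism $\mathcal{I}(\hat{S}^{\hat{G}})\rightarrow\hat{S}^{\hat{G}}$ as a pair of literal inclusions --- no Axiom of Choice, no section $s$, no pullback.

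The second gap concerns ``canonical'' and ``largest'', which are the actual content of the lemma. Your construction depends on $s$ and $\sigma$, and your maximality argument fails: once $\mathcal{I}(\hat{S}^{\hat{G}})$ is fixed you cannot retroactively ``modify $s$ to absorb'' a competing complete injective sub-extension $\tilde{S}$; and even granting the value-matching bijection on orbit spaces, a factorization $\tilde{S}\rightarrow\mathcal{I}(\hat{S}^{\hat{G}})$ needs an equivariant map on points, which stabilizer obstructions can forbid. Your closing concession that canonicity holds only up to the choices of $s$ and $\sigma$ concedes precisely what must be proved; in the paper's route the monomorphism is an inclusion and maximality among sub-extensions is built into the definition of $X'$ as the largest injective invariant subset of the given domain. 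The detour through Lemma \ref{thm_complete_extension} also creates an unverified compatibility you do not address: the pullback construction manufactures its own $\delta$-map, which need not agree with the original $\delta$ of $\hat{S}^{\hat{G}}$ (as $S^G$ is not injective), so the condition $\delta_2\circ g=\delta_1$ required by (\ref{diagram_equivalence_extensions}) may fail for your alleged monomorphism. Two credits are due: your observation that a morphism out of a complete extension forces $C\circ g=0$, confining the correction subspace to $C^{-1}(0)$, is a real subtlety that the paper's own proof glosses over; and smallness is indeed preserved, though not for your stated reason --- one cannot in general choose $\sigma$ with image in $C^{1}(\hat{P};\hat{\mathfrak{g}})$, but by (\ref{inclusion_lemma}) every pullback-type extension is automatically small, while the paper instead gets preservation directly from $C^1_{''}\subseteq C^1(\hat{P};\hat{\mathfrak{g}})\subseteq\operatorname{Conn}(\hat{P};\hat{\mathfrak{g}})$.
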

Directly from the proof of Lemma \ref{thm_complete_extension} in Subsection \ref{universal_extensions} we will conclude that every pullback-type extension is small, so that $\operatorname{SCoh}(S^G;\hat{G})=\operatorname{Coh}(S^G;\hat{G})$ and we have the inclusion 
\begin{equation}\label{inclusion_lemma}
\operatorname{Pb}(S^G;\hat{G})\hookrightarrow \operatorname{SCoh}(S^G;\hat{G})=  \operatorname{Coh}(S^G;\hat{G}) . 
\end{equation}

In Subsection \ref{sec_lemma_3} we will then prove the following lemma, where we recall that a function $f:\operatorname{Ob}(\mathbf{C})\rightarrow \operatorname{Ob}(\mathbf{D})$ between objects of categories is \textit{essentially injective} if it factors through isomorphisms and the induced map $\overline{f}:\operatorname{Iso}(\mathbf{C})\rightarrow\operatorname{Iso}(\mathbf{D})$ is a bijection.

\begin{lemma}\label{lemma_retraction}
The choice of a gauge fixing $\sigma$ as in Lemma \ref{thm_complete_extension} induces a retraction $r_{\sigma}$ for the inclusion \emph{(\ref{inclusion_lemma})}. Furthermore, this retraction is an essentially injective map for the underlying full subcategories of $\mathbf{Ext}(S^G;\hat{G})$.
\end{lemma}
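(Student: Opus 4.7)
The plan is to assemble $r_\sigma$ from the data already produced by Lemma \ref{thm_complete_extension} and then verify separately the retraction property and the essential injectivity. On objects, given a coherent extension $\hat{S}^{\hat{G}}\in \operatorname{Coh}(S^G;\hat{G})$, its extended domain $X_{0}:=\widehat{\operatorname{Conn}}(\hat{P};\hat{\mathfrak{g}})$ is, by injectivity, an injective $\operatorname{Gau}_{\hat{G}}(\hat{P})$-invariant subset for $\hat{S}^{\hat{G}}$, so the hypotheses of Lemma \ref{thm_complete_extension}(1) are met and applying it with the prescribed gauge fixing $\sigma$ yields a complete pullback-type representation supported on $X_{0}(\sigma)\simeq \operatorname{Pb}(S^{G},\overline{S}_{0,\jmath})$; I take this to be $r_{\sigma}(\hat{S}^{\hat{G}})$. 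To extend $r_\sigma$ to morphisms, I would use that any arrow $(f,g):\hat{S}^{\hat{G}}_{1}\to \hat{S}^{\hat{G}}_{2}$ in $\mathbf{Ext}(S^G;\hat{G})$ has $f$ gauge-equivariant, so $f$ descends to a map of orbit spaces and, when composed with the section $\sigma$, induces a map $X_{0}^{(1)}(\sigma)\to X_{0}^{(2)}(\sigma)$; compatibility with $S^{G}$, $\delta$, $C$ and $\hat{S}^{\hat{G}}$ is then enforced by the universal property of the pullback $X_{0}(\sigma)$ built in the proof of Lemma \ref{thm_complete_extension}.

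Second, I would verify the retraction property using part (2) of Lemma \ref{thm_complete_extension}. If $\hat{S}^{\hat{G}}\in \operatorname{Pb}(S^G;\hat{G})$, then by definition its correction subspace is of the form $X_{0}(\sigma')$ for some gauge fixing $\sigma'$, with $\delta$ and $C$ being the canonical pullback legs. Running the construction of Lemma \ref{thm_complete_extension} with $\sigma$ in place of $\sigma'$ and invoking the bijection $X_{0}(\sigma)\simeq X_{0}(\sigma')$ of part (2), which respects the pullback structure, produces a canonical isomorphism $r_{\sigma}(\hat{S}^{\hat{G}})\simeq \hat{S}^{\hat{G}}$ in $\operatorname{Pb}(S^G;\hat{G})$. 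Hence $r_{\sigma}\circ \iota$ is naturally isomorphic to the identity on $\operatorname{Pb}(S^G;\hat{G})$, which is what the retraction condition demands at the level of full subcategories.

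Third, for the essential injectivity claim, surjectivity of the induced map $\overline{r}_{\sigma}:\operatorname{Iso}(\operatorname{Coh}(S^G;\hat{G}))\to \operatorname{Iso}(\operatorname{Pb}(S^G;\hat{G}))$ is immediate from the retraction property just established. The remaining content is injectivity of $\overline{r}_\sigma$: given $A,B\in \operatorname{Coh}(S^G;\hat{G})$ with $r_{\sigma}(A)\simeq r_{\sigma}(B)$ in $\operatorname{Pb}(S^G;\hat{G})$, I must exhibit an isomorphism $A\simeq B$ in $\mathbf{Ext}(S^G;\hat{G})$. My plan is to exploit that $r_\sigma$ touches neither the extended domain nor the extended action functional, so any iso $(f,g):r_{\sigma}(A)\to r_{\sigma}(B)$ already supplies a $\operatorname{Gau}_{\hat{G}}(\hat{P})$-equivariant bijection $f:X_{0}^{A}\to X_{0}^{B}$ intertwining $\hat{S}^{\hat{G}}_{A}$ and $\hat{S}^{\hat{G}}_{B}$, and then to produce a compatible companion $g_{0}:C^{1}_{A}\to C^{1}_{B}$ between the original correction subspaces by restricting $g$ and checking that it lands in $C^1_B$ via the pullback characterization of $X_0(\sigma)$.

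The step I expect to be the main obstacle is this last transport of the original correction subspaces. Since a coherent extension may have $C^{1}\subsetneq X_{0}(\sigma)$, it is not a priori clear that an isomorphism $g$ between the full pullbacks $X_{0}^{A}(\sigma)\to X_{0}^{B}(\sigma)$ restricts to a bijection $C^{1}_{A}\simeq C^{1}_{B}$ compatibly with $\delta$ and $C$. The strategy to overcome this is to use the injectivity of $\overline{\hat{S}}^{\hat{G}}_{A}$ and $\overline{\hat{S}}^{\hat{G}}_{B}$ on the orbit quotients to rigidify $f$ modulo gauge, and then to characterize $C^{1}_{A}$ and $C^{1}_{B}$ intrinsically as the preimages, under the pullback projection to $\operatorname{Conn}(P;\mathfrak{g})$, of a subset determined by $S^{G}\circ \delta$; any isomorphism at the level of the pullback necessarily preserves such intrinsic subsets, which yields the desired restriction and completes the argument.
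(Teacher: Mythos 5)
There is a genuine gap, and it sits exactly where you predicted it would: in your definition of $r_\sigma$ on objects. You take $r_\sigma(\hat{S}^{\hat{G}})$ to be the complete pullback-type extension ``supported on $X_0(\sigma)$,'' i.e.\ with correction subspace the whole of $X_0(\sigma)\simeq \operatorname{Pb}(S^G,\overline{S}_{0,\jmath})$. This discards the original correction subspace, which is independent data that essential injectivity must remember: a pullback-type extension only satisfies $C^1(\hat{P};\hat{\mathfrak{g}})\subset X_0(\sigma)$, and \emph{any} subspace of $X_0(\sigma)$, equipped with the restricted pullback legs as $\delta$ and with $C=0$, defines a complete, injective, pullback-type extension. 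Concretely, let $A$ and $B$ have the same extended domain and extended functional, with $C^1_A=X_0(\sigma)$ and $C^1_B=0$. Your $r_\sigma$ sends both to the same full-pullback extension, yet $A\not\simeq B$ in $\mathbf{Ext}(S^G;\hat{G})$, since an isomorphism requires a bijection $g:C^1_A\rightarrow C^1_B$; so $\overline{r}_\sigma$ is not injective, and for the same reason $r_\sigma(B)\not\simeq B$, so even your weakened ``retraction up to natural isomorphism'' fails on $\operatorname{Pb}(S^G;\hat{G})$. Your proposed patch cannot rescue this: there is no intrinsic characterization of $C^1$ as a preimage of a subset determined by $S^G\circ\delta$. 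On the pullback one has $S^G\circ\delta=\overline{S}_{0,\jmath}\circ\pi_2$, whose fibers are exactly the fibers of $\pi_2$ and hence do not separate points in the $\delta$-direction; and a subset determined by ``$\delta$ restricted to $C^1$'' presupposes $C^1$, which is circular. The correction subspace is simply extra structure, not recoverable from the pullback.

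The paper's proof avoids this by never replacing the correction subspace. From diagram (\ref{diagram_retraction}) the universal property yields a map $\mu:C^1(\hat{P};\hat{\mathfrak{g}})\rightarrow \operatorname{Pb}_\sigma$, which is injective because $S^G\circ\pi_1\circ\mu=\overline{\hat{S}^{\hat{G}}}\circ\overline{\sigma}\circ\jmath\circ\operatorname{inc}$ and the right-hand side is injective by the injectivity hypothesis on $\hat{S}^{\hat{G}}$; the extension $r_\sigma(\hat{S}^{\hat{G}})$ is then the \emph{same} extension re-presented as pullback-type, with its correction data transported along $\mu$. This makes the retraction for (\ref{inclusion_lemma}) strict (applied to a pullback-type extension the construction does nothing), and essential injectivity follows by diagram pasting rather than reconstruction: an isomorphism $r_\sigma(\hat{S}_1^{\hat{G}})\simeq r_\sigma(\hat{S}_2^{\hat{G}})$ already contains a bijection between $\mu_1(C^1_1)$ and $\mu_2(C^1_2)$ commuting with the $\delta$'s and $C$'s, which transports back along the injections $\mu_i$ so that the two copies of (\ref{diagram_retraction}) linked by the equivalence data collapse to a diagram (\ref{diagram_equivalence_extensions}) for $\hat{S}_1^{\hat{G}}$ and $\hat{S}_2^{\hat{G}}$ themselves. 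If you redefine your $r_\sigma$ to carry $\mu(C^1)$ as the correction subspace of the image, your surjectivity argument and your plan of ``restricting $g$ to the correction subspaces'' become correct, because the restriction is then part of the morphism data rather than something to be recovered. A minor further remark: your effort to extend $r_\sigma$ to morphisms is unnecessary, since essential injectivity in this paper is a property of a map on objects only.
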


In the sequence, in Subsection \ref{sec_lemma_4}  we will give a proof for the following result:

\begin{lemma}\label{lemma_total_order}
The partial order $\lesssim$ is a total order when restricted to the class $\operatorname{Pb}(S^G;\hat{G})$ of pullback extensions.
\end{lemma}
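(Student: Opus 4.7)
The plan is to show that any two pullback-type extensions are $\lesssim$-comparable by directly constructing a morphism of extensions in one of the two possible directions, exploiting the universal property of the categorical pullbacks that feature in their definition.

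First, I would spell out what a morphism $(f,g)$ between two pullback-type extensions $\hat{S}^{\hat{G}}_1$ and $\hat{S}^{\hat{G}}_2$ requires. By definition one has $C^1_i \subseteq X_{0,i}(\sigma_i) \simeq \operatorname{Pb}(S^G, \overline{S}_{0,i,\jmath_i})$ for suitable gauge fixings $\sigma_i$, with $\delta_i$ realized as the first-factor projection of the pullback; since both extensions are complete, $C_i \equiv 0$, and the commutativity of diagram (\ref{diagram_equivalence_extensions}) reduces to requiring an equivariant $f: \widehat{\operatorname{Conn}}_1(\hat{P};\hat{\mathfrak{g}}) \to \widehat{\operatorname{Conn}}_2(\hat{P};\hat{\mathfrak{g}})$ with $\hat{S}^{\hat{G}}_2 \circ f = \hat{S}^{\hat{G}}_1$, whose restriction $g$ to $C^1_1$ lands in $C^1_2$ and satisfies $\delta_2 \circ g = \delta_1$.

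Second, by the injectivity of pullback extensions (the corollary immediately preceding Lemma \ref{prop_injective}), the factored action functionals $\overline{S}_{0,i,\jmath_i}$ are injective, so specifying $f$ modulo gauge amounts to specifying a set map $\overline{f}: X_{0,1}/\operatorname{Gau}_{\hat{G}}(\hat{P}) \to X_{0,2}/\operatorname{Gau}_{\hat{G}}(\hat{P})$ with $\overline{S}_{0,2,\jmath_2}\circ \overline{f} = \overline{S}_{0,1,\jmath_1}$. The Axiom of Choice applied to the canonical projection $X_{0,2}\to X_{0,2}/\operatorname{Gau}_{\hat{G}}(\hat{P})$ lifts $\overline{f}$ to an equivariant $f$, and the universal property of $\operatorname{Pb}_2$ then automatically furnishes the required $g$ on correction subspaces from the compatible cone data $(\delta_1,\overline{f})$. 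Thus the existence of a morphism $\hat{S}^{\hat{G}}_1 \to \hat{S}^{\hat{G}}_2$ boils down to the set-theoretic containment $\operatorname{image}(\overline{S}_{0,1,\jmath_1}) \subseteq \operatorname{image}(\overline{S}_{0,2,\jmath_2})$ inside $\mathbb{R}$.

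The main obstacle is that arbitrary subsets of $\mathbb{R}$ need not be comparable under inclusion, so the total order conclusion requires that for any two pullback extensions one of the two image containments always holds. My plan to handle this is to exploit the pullback description of $X_0(\sigma)$ together with its $\sigma$-independence from Lemma \ref{thm_complete_extension}(2): every element of $X_0(\sigma)$ corresponds to a pair $(D,[\alpha])$ with $\overline{S}_{0,\jmath}([\alpha]) = S^G(D)$, so the image of $\overline{S}_{0,\jmath}$ equals the image of $S^G$ on a specific subset of $\operatorname{Conn}(P;\mathfrak{g})$ attached to the extension. These subsets of $\operatorname{Conn}(P;\mathfrak{g})$ are mutually comparable because they all arise from pullback constructions over the fixed common leg $S^G$, and a symmetric case split on which of the two subsets is larger then yields the morphism in the corresponding direction, establishing that $\lesssim$ is total on $\operatorname{Pb}(S^G;\hat{G})$.
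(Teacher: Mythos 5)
Your reduction of ``there is a morphism $\hat{S}^{\hat{G}}_1 \to \hat{S}^{\hat{G}}_2$'' to the containment $\operatorname{img}(\overline{S}_{0,1,\jmath_1}) \subseteq \operatorname{img}(\overline{S}_{0,2,\jmath_2})$ is sound as far as it goes (indeed $\hat{S}^{\hat{G}}_2 \circ f = \hat{S}^{\hat{G}}_1$ forces image containment even without injectivity), but the step where you discharge this obligation is a genuine gap, and it is exactly where the totality claim lives. The assertion that the relevant subsets ``are mutually comparable because they all arise from pullback constructions over the fixed common leg $S^G$'' is unsupported and false in this generality: the pullback $\operatorname{Pb}(S^G,\overline{S}_{0,\jmath})$ only constrains pairs $(D,[\alpha])$ to satisfy $S^G(D)=\overline{S}_{0,\jmath}([\alpha])$; it imposes no relation between the images of two \emph{different} functionals $\overline{S}_{0,1,\jmath_1}$ and $\overline{S}_{0,2,\jmath_2}$. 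Nothing in the definition of an injective $\operatorname{Gau}_{\hat{G}}(\hat{P})$-invariant subset prevents two such extensions whose value sets meet the image of $S^G$ in disjoint nonempty subsets of $\mathbb{R}$, in which case your own reduction shows there is no morphism in \emph{either} direction, and the concluding case split never gets started. There is also a secondary flaw: the Axiom of Choice does not by itself lift $\overline{f}$ to an \emph{equivariant} $f$. Choosing orbit representatives and setting $f(g\cdot x_0)=g\cdot y_0$ is well defined only when the stabilizer of $x_0$ is contained in that of $y_0$, a condition you have not arranged; a bare set-theoretic section of $X_{0,2}\to X_{0,2}/\operatorname{Gau}_{\hat{G}}(\hat{P})$ gives a map, not an equivariant one.

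The paper's proof avoids value-set comparisons in $\mathbb{R}$ entirely. It reduces the lemma to a nesting statement: if $X_0\subseteq X_1$ are injective invariant subsets \emph{for the same functional} $S$, then the inclusion links the two copies of diagram \eqref{diagram_1_thm_complete_extension}, and the universal property of the pullback yields a monomorphism $\eta:\operatorname{Pb}(S^G,\overline{S}_{0,\jmath_0})\to\operatorname{Pb}(S^G,\overline{S}_{1,\jmath_1})$ with $\overline{\jmath}_0=\overline{\jmath}_1\circ\eta$, exhibiting $S_1$ as a complete extension of $S_0$ and hence the morphism realizing $\lesssim$. In other words, the comparison morphism is induced by an inclusion of domains, not by inverting an injective functional on its image. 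Your argument could be repaired by restricting attention to pairs of pullback extensions with nested domains cut out of a common ambient functional --- which is precisely the hypothesis the paper's sub-lemma carries --- and, to your credit, your analysis makes visible why some comparability hypothesis of this kind is needed for totality at all; but as written, the proposal does not prove the lemma.
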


As a consequence of all these steps we have the following conclusion, where $\operatorname{ED}(S^G;\hat{G})$ denotes the collection of all extended domains of a YMT $S^G$ relatively to a basic extension $\imath :G \hookrightarrow \hat{G}$.

\begin{theorem}[Theorem C]
Let $S^G$ be a YMT theory, $\imath: G\hookrightarrow \hat{G}$  a basic extension of $G$ and $I\subset \operatorname{ED}(S^G;\hat{G})$ a subset of mutually disjoint extended domains. Then every class $\mathcal{E}_0$ of extensions such that $\vert \mathcal{E}_0 \vert \geq \vert I \vert$ is internally maximal to the class $\operatorname{Coh}(S^G;\hat{G})(I)$ of coherent extensions.
\end{theorem}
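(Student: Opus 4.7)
The plan is to deduce Theorem C as a direct application of Theorem B (i.e., Theorem \ref{thm_zorn}) to the class $\mathcal{E}_1 := \operatorname{Coh}(S^G;\hat{G})$. The heart of the argument is to verify that $\operatorname{Coh}(S^G;\hat{G})$ is $I$-coherent for maximality in the sense of Subsection \ref{sec_zorn}, namely: (i) the partial order $\lesssim$ restricted to $\operatorname{Coh}(S^G;\hat{G})(I)$ is total, and (ii) $\operatorname{Coh}(S^G;\hat{G})$ admits $J$-coproducts for every $J\subset I$. Together with the cardinality hypothesis $|\mathcal{E}_0|\geq |I|$, Theorem \ref{thm_zorn} will then yield the desired internal maximality.

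The totality of $\lesssim$ is essentially free. Since by definition $\operatorname{Coh}(S^G;\hat{G}) = \operatorname{Inj}(S^G;\hat{G}) \cap \operatorname{Pb}(S^G;\hat{G}) \subset \operatorname{Pb}(S^G;\hat{G})$, Lemma \ref{lemma_total_order} shows that $\lesssim$ is already total on the larger class $\operatorname{Pb}(S^G;\hat{G})$, so its restriction to any subclass of coherent extensions (in particular to $\operatorname{Coh}(S^G;\hat{G})(I)$) remains total.

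The main step is constructing $J$-coproducts for $J\subset I$, and this is where the hypothesis that the extended domains in $I$ are mutually disjoint is essential. Given a $J$-indexed family of coherent extensions whose extended domains are labelled by elements of $J$, I would form their coproduct by taking the set-theoretic disjoint union of the extended domains and of the correction subspaces, gluing the $\operatorname{Gau}_{\hat{G}}(\hat{P})$-actions, the extended action functionals, the correction functionals, and the maps $\delta_j$ piecewise. Mutual disjointness of the extended domains is precisely what makes this piecewise definition well-posed, and ensures that gauge invariance propagates from each component to the union. Injectivity on the quotient by $\operatorname{Gau}_{\hat{G}}(\hat{P})$ is inherited since disjointness prevents identifications between distinct components, and the pullback-type structure is recovered by assembling the gauge fixings supplied by Lemma \ref{thm_complete_extension} into a single gauge fixing on the disjoint union. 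The universal property of the coproduct then follows from the universal property of disjoint unions at the level of the underlying sets, combined with the morphism-level description encoded in diagram (\ref{diagram_equivalence_extensions}).

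With $I$-coherence established, Theorem \ref{thm_zorn} yields internal maximality of any $\mathcal{E}_0$ with $|\mathcal{E}_0|\geq |\operatorname{Coh}(S^G;\hat{G})(I)|$ to $\operatorname{Coh}(S^G;\hat{G})(I)$, which under the assumption $|\mathcal{E}_0|\geq |I|$ gives the statement. I expect the main obstacle to be the careful verification that the piecewise-defined coproduct is genuinely coherent rather than merely a well-defined extension: in particular, checking that injectivity passes to the quotient on the union and that the pullback realisation of each summand assembles compatibly into a pullback realisation of the whole. The remaining bookkeeping about cardinality and the retraction $r$ is already encapsulated in the proof of Theorem \ref{thm_zorn}, so once coherence is secured, the deduction is immediate.
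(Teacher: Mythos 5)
Your proposal is correct and follows the paper's strategy almost verbatim: verify that $\operatorname{Coh}(S^G;\hat{G})$ is $I$-coherent for maximality --- with $J$-coproducts, $J\subset I$, built as disjoint unions of extensions, well-posed precisely because the extended domains in $I$ are mutually disjoint, and with totality of $\lesssim$ --- and then invoke Theorem \ref{thm_zorn}. The one point where you genuinely diverge is the totality step. The paper proves totality on $\operatorname{Pb}(S^G;\hat{G})$ via Lemma \ref{lemma_total_order} and then transfers it to $\operatorname{Coh}(S^G;\hat{G})$ through the essentially injective retraction $r_{\sigma}$ of Lemma \ref{lemma_retraction}; you instead observe that $\operatorname{Coh}(S^G;\hat{G})=\operatorname{Inj}(S^G;\hat{G})\cap\operatorname{Pb}(S^G;\hat{G})\subset\operatorname{Pb}(S^G;\hat{G})$ and that the restriction of a total order to a subclass is total --- legitimate here, since $\lesssim$ is defined by existence of morphisms in the ambient category $\mathbf{Ext}(S^G;\hat{G})$ and all subcategories involved are full. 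Under the paper's literal definition of coherence your shortcut is sound and renders Lemma \ref{lemma_retraction} unnecessary for Theorem C; the retraction becomes essential only if one reads the containment in the direction of inclusion (\ref{inclusion_lemma}), namely $\operatorname{Pb}(S^G;\hat{G})\hookrightarrow\operatorname{Coh}(S^G;\hat{G})$, where totality must be pushed from the smaller class to the larger one rather than restricted. One caveat: your closing inference, passing from the hypothesis $\vert\mathcal{E}_0\vert\geq\vert I\vert$ of the statement to the hypothesis $\vert\mathcal{E}_0\vert\geq\vert\operatorname{Coh}(S^G;\hat{G})(I)\vert$ of Theorem \ref{thm_zorn}, is not justified as written (nothing bounds $\vert\operatorname{Coh}(S^G;\hat{G})(I)\vert$ by $\vert I\vert$), but the paper's own proof makes exactly the same leap when it says the result ``follows from Theorem \ref{thm_zorn}'', so this is an imprecision inherited from the paper rather than a gap of your argument relative to it.
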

\begin{proof}
Since $I$ is a subset of disjoint extended domains, its union coincides with the disjoint union and is also an extended domain. Thus, given extensions $\hat{S}^{\hat{G}}_i$, with $i\in I$, each of them with extended $i$, one can define its disjoint union $\coprod_i \hat{S}^{\hat{G}}_i$, which remains an extension of $S^G$. Furthermore, if each $\hat{S}^{\hat{G}}_i$ is injective (resp. of pullback-type), then the coproduct is injective (resp. of pullback-type too).  Therefore,  $\operatorname{Coh}(S^G;\hat{G})=\operatorname{Inj}(S^G;\hat{G})\cap \operatorname{Pb}(S^G;\hat{G})$ has $I$-coproducts. By Lemma \ref{lemma_total_order} the class $\operatorname{Pb}(S^G;\hat{G})$ is totally ordered, while by Lemma \ref{lemma_retraction} there exists an essentially injective map between pullback-type extensions and coherent extensions, showing that $\operatorname{Coh}(S^G;\hat{G})$ is totally ordered too. Thus, it is actually $I$-coherent for maximality and the result follows from Theorem \ref{thm_zorn}.
\end{proof}

\subsection{Proof of Lemma \ref{thm_complete_extension}}\label{universal_extensions}

\quad \;\,The proof consists essentially in taking the pullback between $S^{G}$
and a suitable monomorphism. Indeed, notice that we have the commutative
diagram below, where $\overline{S}_{0}$ exists and is injective due
to the injectivity hypothesis on $X_{0}$. Since the composition of
injective maps is injective, we see that $\overline{S}_{0,\jmath}=\overline{S}_{0}\circ\overline{\jmath}$
is itself injective.\begin{equation}{\label{diagram_1_thm_complete_extension}
\xymatrix{ \ar@{^(->}[d]_{\jmath} \operatorname{Conn}(\hat{P};\hat{\mathfrak{g}}) \ar[r]^-{\pi} & \operatorname{Conn}(\hat{P};\hat{\mathfrak{g}})/ \operatorname{Gau}_{\hat{G}}(\hat{P}) \ar@{_(->}[d]^{\overline{\jmath}} \\
\ar[d]_{S_0} X_0 \ar[r]_-{\pi} & X_0/ \operatorname{Gau}_{\hat{G}}(\hat{P}) \ar@/^/[ld]^{\overline{S}_{0}} \\
\mathbb{R}}} 
\end{equation}

Consider now the pullback between $S^{G}$ and $\overline{S}_{0,\jmath}$,
as in the diagram below. Due to the stability of monomorphisms under
pullbacks, it follows that $\overline{\jmath}_{0}$ is an inclusion.
Let $\sigma$ be a gauge fixing in $\operatorname{Conn}(\hat{P};\hat{\mathfrak{g}})$.
Since sections are always monomorphisms, we see that $\sigma$ is injective
and then the composition $\overline{\sigma}_{0}=\sigma\circ\overline{\jmath}_{0}$
below is an injection. Thus, the composition $\jmath\circ\overline{\sigma}_{0}$
is injective too, so that it embeds the pullback $\operatorname{Pb}(S^{G},\overline{S}_{0,\jmath})$
as a subset of $X_{0}$, which we denote by $X_{0}(\sigma)$. Due
to the commutativity of the left-hand side square, we have $S_{0}\vert_{X_{0}(\sigma)}=S^{G}\circ\delta$.
Thus, if $X_{0}(\sigma)$ is invariant under $\operatorname{Gau}_{G}(P)$,
then $S_{0}$ is a complete extension of $S^{G}$ on the correction
subspace $X_{0}(\sigma)$. This will follow from the universality
of pullbacks.
\begin{equation}{\label{diagram_2_thm_complete_extension}
\xymatrix{\ar[dd]_{\delta} \operatorname{Pb}(S^G,\overline{S}_{0,\jmath}) \ar@{-->}[r]_-{\overline{\sigma}_{0}} \ar@/^{0.7cm}/[rr]^{\overline{\jmath}_{0}} & \ar@{^(->}[d]_{\jmath} \operatorname{Conn}(\hat{P};\hat{\mathfrak{g}}) \ar[r]<0.1cm>^-{\pi} & \operatorname{Conn}(\hat{P};\hat{\mathfrak{g}})/ \operatorname{Gau}_{\hat{G}}(\hat{P}) \ar@{_(->}[d]^{\overline{\jmath}} \ar[l]<0.1cm>^-{\sigma}\\
& \ar[d]_{S_0} X_0 \ar[r]_-{\pi} & X_0/ \operatorname{Gau}_{\hat{G}}(\hat{P}) \ar@/^/[ld]^{\overline{S}_{0}} \\
\operatorname{Conn}(P;\mathfrak{g}) \ar[r]_-{S^G} & \mathbb{R}}}
\end{equation}
Notice that what we have to prove is the existence of the dotted arrow
below, where we replaced $X_{0}(\sigma)$ with $\operatorname{Pb}(S^{G},\overline{S}_{0,\jmath})$
since there is bijective correspondence between them.$$
\xymatrix{\operatorname{Gau}_G(P)\times \operatorname{Pb}(S^{G},\overline{S}_{0,\jmath}) \ar[d]_{id\times \delta} \ar@{-->}[r]^-{*}& \operatorname{Pb}(S^{G},\overline{S}_{0,\jmath}) \ar[d]^{\delta} \\
\operatorname{Gau}_G(P)\times \operatorname{Conn}(P;\mathfrak{g}) \ar[r]_-{(-)^{*}} & \operatorname{Conn}(P;\mathfrak{g})}
$$Now, let $\xi:\operatorname{Gau}_{G}(P)\rightarrow\operatorname{Gau}_{\hat{G}}(\hat{P})$ be the map discussed in Section \ref{sec_extension} and notice that the exterior diagram below
commutes, which follows precisely from the realization of the pullback
as the subset in which the functionals $S^{G}$ and $\overline{S}_{0}$
coincide. Thus, by the universality of pullbacks, there exists the
dotted arrow, which is the desired map. This concludes the first part
of the proof. $$
\xymatrix{ \ar@{-->}[rd]^{*} \ar@/^{0.8cm}/[rrrd]^{\pi \circ \xi} \ar@/_/[dddr]_{(-)^{*}\circ(id\times \delta) } \operatorname{Gau}_G(P)\times \operatorname{Pb}(S^{G},\overline{S}_{0,\jmath}) \\
& \ar[dd]_{\delta} \operatorname{Pb}(S^G,\overline{S}_{0,\jmath}) \ar[r]_-{\overline{\sigma}_{0}} \ar@/^{0.7cm}/[rr]^{\overline{\jmath}_{0}} & \ar@{^(->}[d]_{\jmath} \operatorname{Conn}(\hat{P};\hat{\mathfrak{g}}) \ar[r]<0.1cm>^-{\pi} & \operatorname{Conn}(\hat{P};\hat{\mathfrak{g}})/ \operatorname{Gau}_{\hat{G}}(\hat{P}) \ar@{_(->}[d]^{\overline{\jmath}} \ar[l]<0.1cm>^-{\sigma}\\
&& \ar[d]_{S_0} X_0 \ar[r]_-{\pi} & X_0/ \operatorname{Gau}_{\hat{G}}(\hat{P}) \ar@/^/[ld]^{\overline{S}_{0}} \\
&\operatorname{Conn}(P;\mathfrak{g}) \ar[r]_-{S^G} & \mathbb{R}} 
$$For the second part, given gauge fixing $\sigma$ and $\sigma'$,
let us prove that the induced extensions have the same correction
subspaces, i.e, there exists a bijection $X_{0}(\sigma)\simeq X_{0}(\sigma')$.
Looking at diagram (\ref{diagram_1_thm_complete_extension}) we see
that $X_{0}(\sigma)\simeq\operatorname{img}(\jmath\circ\sigma\circ\overline{\jmath})$,
while $X_{0}(\sigma')\simeq\operatorname{img}(\jmath\circ\sigma'\circ\overline{\jmath})$.
Since those maps are injective, it follows that their image are in
bijection with their domain. But both maps have the same domain: the
pullback $\operatorname{Pb}(S^{G},\overline{S}_{0,\jmath})$. Thus,
$X_{u}(\sigma)\simeq X_{0}(\sigma')$ and the proof is done.
\subsection{Proof of Lemma \ref{prop_injective}}\label{sec_lemma_2}
\quad\;\, Let $\hat{S}^{\hat{G}}$ be an extension of $S^G$ with extended domain $X\in\operatorname{ED}(S^G;\hat{G})$, correction term $C:C^1(\hat{P};\hat{\mathfrak{g}})\rightarrow \mathbb{R}$ and $\delta$-map $\delta:C^1(\hat{P};\hat{\mathfrak{g}})\rightarrow \operatorname{Conn}(P;\mathfrak{g})$. From it, define a new extension $\hat{S}_0^{\hat{G}}$ with same data, but now with a null correction term. This is obviously a complete extension of $S^G$. In turn, let $X'\subset X$ be the largest $\operatorname{Gau}_{\hat{G}}(\hat{P})$-invariant subset restricted to which $\hat{S}_0^{\hat{G}}$ remains $\operatorname{Gau}_{\hat{G}}$-invariant and such that it becomes injective on the quotient space $X'/\operatorname{Gau}_{\hat{G}}$. Since $X'$ does not necessarily contains $\operatorname{Conn}(\hat{P};\hat{\mathfrak{g}})$, it does not need to be an extended domain. On the other hand, the union $X''=X'\cup \operatorname{Conn}(\hat{P};\hat{\mathfrak{g}})$ it is. Thus, we can try to build an injective extension by considering the restriction $\hat{S}_0^{\hat{G}}\vert_{X''}$ as extended functional. Again, this may fail since a priori the correction space $C^1(\hat{P};\hat{\mathfrak{g}})\subset X$ needs not be contained in $X''$. This leads us to consider the intersection $C^1_{''}=C^1(\hat{P};\hat{\mathfrak{g}})\cap X''$ as correction subspace and the restrictions $C\vert_{C^1_{''}}$ and $\delta\vert_{C^1_{''}}$ as correction functional and $\delta$-map, respectively. It is straightforward to check that this really defines a complete and injective extension of $S^G$, which we denote by $\mathcal{I}(\hat{S}^{\hat{G}})$. By construction, if $C^1(\hat{P};\hat{\mathfrak{g}})\subset \operatorname{Conn}(\hat{P};\hat{\mathfrak{g}})$, i.e, if the starting extension is small, then the same happens to $C^1_{''}$, so that the rule $\hat{S}^{\hat{G}}\mapsto \mathcal{I}(\hat{S}^{\hat{G}})$ preserves the class of small extensions. Of course, we have a morphism $F:\mathcal{I}(\hat{S}^{\hat{G}})\rightarrow \hat{S}^{\hat{G}}$ given by the inclusions $X''\hookrightarrow X$ and $C^1_{''}\hookrightarrow C^1(\hat{P};\hat{\mathfrak{g}})$, which is clearly a monomorphism, finishing the proof.

\subsection{Proof of Lemma \ref{lemma_retraction}}\label{sec_lemma_3}
\quad\;\,Let $\hat{S}^{\hat{G}}:\widehat{\operatorname{Conn}}(\hat{P};\hat{\mathfrak{g}})\rightarrow\mathbb{R}$
be an injective extension of $S^{G}$ with small correction subspace
$C^{1}(\hat{P};\hat{\mathfrak{g}})$ and map $\delta:C^{1}(\hat{P};\hat{\mathfrak{g}})\rightarrow\operatorname{Conn}(P;\mathfrak{g})$.
We can then form the diagram below, which is commutative. For each
gauge fixing $\sigma$ we can compute the pullback and then there
exists a unique map $\mu$, as below. This map is injective. Indeed,
for every global section $\overline{\sigma}$ fo $\overline{\pi}$
one can write $S^{G}\circ\pi_{1}\circ\mu=\overline{\hat{S}^{\hat{G}}}\circ\overline{\sigma}\circ\jmath\circ\operatorname{inc}$.
Since $\hat{S}^{\hat{G}}$ is injective, the right-hand side is an
injection, which implies that the first map of the left-hand side,
$\mu$, is an injection too. By Lemma \ref{thm_complete_extension}, the pullback defines a
pullback-type extension. Thus, because $\mu$ is an injection, $\hat{S}^{\hat{G}}$
is of pullback-type. The construction above shows that the choice of a gauge fixing $\sigma$
induces a map $r_{\sigma}:\operatorname{Ext}_{0}(S^{G};\hat{G})\rightarrow\operatorname{PbExt}(S^{G};\hat{G})$.
Now, let $\hat{S}^{\hat{G}}$ be a pullback-type extension, regard
it as a injective extension and notice that the construction above
applied to it does not do anything. This means that $r_{\sigma}$
is a retraction, as desired. 
\begin{equation}{\label{diagram_retraction}
\xymatrix{\ar@{-->}[r]_{\mu} \ar@/^{0.5cm}/[rr]^{\pi \circ \operatorname{inc}} \ar@/_/[ddr]_{\delta} \ar@{^(->}@/^{1cm}/[rrr]^{\operatorname{inc}} C^{1}(\hat{P};\hat{\mathfrak{g}}) &\ar[dd]_{\pi_1} \operatorname{Pb}_{\sigma} \ar[r]^-{\pi_2}   & \ar@{^(->}[d]_{\overline{\jmath}} \operatorname{Conn}(\hat{P};\hat{\mathfrak{g}})/ \operatorname{Gau}_{\hat{G}}(\hat{P}) \ar[r]<-0.1cm>_-{\sigma} & \ar@{_(->}[d]^{\jmath} \ar[l]<-0.1cm>_-{\pi} \operatorname{Conn}(\hat{P};\hat{\mathfrak{g}})\\
&&  \widehat{\operatorname{Conn}}(\hat{P};\hat{\mathfrak{g}})/ \operatorname{Gau}_{\hat{G}}(\hat{P}) \ar[r]_-{\overline{\pi}} \ar[d]_{\overline{\hat{S}^{\hat{G}}}} & \widehat{\operatorname{Conn}}(\hat{P};\hat{\mathfrak{g}})  \ar@/^/[ld]^{\hat{S}^{\hat{G}}} \\
&\operatorname{Conn}(P;\mathfrak{g}) \ar[r]_-{S^G} & \mathbb{R}} 
}
\end{equation}

Let $\hat{S}_{1}^{\hat{G}}$ and $\hat{S}_{2}^{\hat{G}}$ be complete,
injective and with small correction subspace extensions of $S^{G}$.
Thus, for each of them we have the commutative diagram (\ref{diagram_retraction}).
If they are equivalent, both commutative diagrams are linked via
the bijections $f$ and $g$, producing a bigger diagram\footnote{This bigger diagram is best visualized
as a 3D diagram whose base and roof are the copies (\ref{diagram_retraction})
corresponding to $\hat{S}_{1}^{\hat{G}}$ and $\hat{S}_{2}^{\hat{G}}$.}. 
It is commutative due to the commutativity of (\ref{diagram_equivalence_extensions}).
Furthermore, it reduces to a diagram (\ref{diagram_equivalence_extensions})
for $\hat{S}_{i}^{\hat{G}}$ regarded as pullback-type extensions.
This means that $\hat{S}_{1}^{\hat{G}}\simeq\hat{S}_{2}^{\hat{G}}$
implies $r_{\sigma}(\hat{S}_{1}^{\hat{G}})\simeq r_{\sigma}(\hat{S}_{2}^{\hat{G}})$,
so that $r_{\sigma}$ passes to the quotient. Let $\overline{r}_{\sigma}$
be the induced map on the quotient spaces. Since $r_{\sigma}$ is
a retraction, $\overline{r}_{\sigma}$ is too. Thus, it is enough
to prove that $\overline{r}_{\sigma}$ is injective. Assume $r_{\sigma}(\hat{S}_{1}^{\hat{G}})\simeq r_{\sigma}(\hat{S}_{2}^{\hat{G}})$.
This means that we have a diagram (\ref{diagram_equivalence_extensions})
for $r_{\sigma}(\hat{S}_{i}^{\hat{G}})$. But we also have the commutative
diagram (\ref{diagram_retraction}) corresponding to each $r_{\sigma}(\hat{S}_{i}^{\hat{G}})$.
And these diagrams are linked due to the maps defining the equivalence
$r_{\sigma}(\hat{S}_{1}^{\hat{G}})\simeq r_{\sigma}(\hat{S}_{2}^{\hat{G}})$.
Therefore, we have again a bigger commutative diagram, which again
reduces to a diagram (\ref{diagram_equivalence_extensions}), now
for $\hat{S}_{1}^{\hat{G}}$ and $\hat{S}_{2}^{\hat{G}}$. Thus, $\hat{S}_{1}^{\hat{G}}\simeq\hat{S}_{2}^{\hat{G}}$,
finishing the proof.
\subsection{Proof of Lemma \ref{lemma_total_order}}\label{sec_lemma_4}
\quad\;\,We have to prove that the relation $\lesssim$ is a total order in the class of pullback-type extensions. Thus, given two pullback-type extensions we have to prove that there exists a morphism between them. This is a direct consequence of the following lemma.

\begin{lemma}
In the notations of Theorem \ref{thm_complete_extension}, let $\operatorname{Conn}(\hat{P};\hat{\mathfrak{g}})\subset X_{i}\subset\Omega^{1}(\hat{P};\hat{\mathfrak{g}})$, with $i=0,1$,
be two injective $\operatorname{Gau}_{\hat{G}}(\hat{P})$-invariant
subsets for a map $S:\Omega^{1}(\hat{P};\hat{\mathfrak{g}})\rightarrow\mathbb{R}$. If $X_0\subset X_1$, then for every gauge fixing $\sigma$ we have $X_0(\sigma) \subset X_1 (\sigma)$, where $X_i (\sigma)$ are the subsets of previous theorem. Furthermore, the diagram below commutes, meaning that $S_1$ is a complete extension of $S_0$.
\begin{equation}{\label{diagram_lemma_complete_extension}
\xymatrix{X_0(\sigma) \ar@{^(->}[rr]^-{\jmath_0 \circ \overline{\sigma}_0} \ar@{^(->}[d] && \ar@{_(->}[d] X_0 \ar[r]^{S_0} \ar[r] & \mathbb{R} \ar@{=}[d]   \\
X_1(\sigma) \ar@{^(->}[rr]_-{\jmath_1 \circ \overline{\sigma}_1} && X_1 \ar[r]_{S_1} & \mathbb{R}}}
\end{equation}
\end{lemma}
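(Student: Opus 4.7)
The plan is to identify the two pullback constructions from Lemma~\ref{thm_complete_extension} applied to $X_0$ and $X_1$, observing that they are built from the \emph{same} cospan and therefore coincide on the nose; once this identification is made, both assertions follow from the compatibility of the inclusion $X_0\hookrightarrow X_1$ with the common gauge fixing $\sigma$.

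First I will show that $\overline{S}_{0,\jmath,0}=\overline{S}_{0,\jmath,1}$ as maps $\operatorname{Conn}(\hat{P};\hat{\mathfrak{g}})/\operatorname{Gau}_{\hat{G}}(\hat{P})\rightarrow\mathbb{R}$. Indeed, for any connection $D\in\operatorname{Conn}(\hat{P};\hat{\mathfrak{g}})$ one has $\overline{S}_{0,\jmath,i}([D])=\overline{S}_{0,i}([\jmath_i(D)])=S_i(\jmath_i(D))=S(D)$, since each $S_i$ is the restriction of the common map $S$ and $D$ lies in the common subspace $\operatorname{Conn}(\hat{P};\hat{\mathfrak{g}})\subset X_0\subset X_1$. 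Because the cospan formed by $S^{G}$ and $\overline{S}_{0,\jmath,i}$ is then the same for $i=0,1$, the two pullbacks in diagram~(\ref{diagram_2_thm_complete_extension}) agree as a single object $\operatorname{Pb}$ with identical projections $\delta$ and $\overline{\jmath}_{0}$. Using the same gauge fixing $\sigma$ on $\operatorname{Conn}(\hat{P};\hat{\mathfrak{g}})$ thus produces a single embedding $\overline{\sigma}_{0}=\sigma\circ\overline{\jmath}_{0}\colon\operatorname{Pb}\to\operatorname{Conn}(\hat{P};\hat{\mathfrak{g}})$ common to both indices.

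Next I will deduce the inclusion $X_{0}(\sigma)\subset X_{1}(\sigma)$. Writing $\iota_{01}\colon X_{0}\hookrightarrow X_{1}$ for the hypothesized inclusion, the connection injections factor as $\jmath_{1}=\iota_{01}\circ\jmath_{0}$, so the map embedding $\operatorname{Pb}$ into $X_{1}$ factors as $\jmath_{1}\circ\overline{\sigma}_{0}=\iota_{01}\circ(\jmath_{0}\circ\overline{\sigma}_{0})$. Since by definition $X_{i}(\sigma)$ is the image of $\jmath_{i}\circ\overline{\sigma}_{0}$ inside $X_{i}$, this yields $X_{1}(\sigma)=\iota_{01}(X_{0}(\sigma))$; identifying $X_{0}$ with its image in $X_{1}$, this is precisely the inclusion $X_0(\sigma)\subset X_1(\sigma)$ that serves as the left vertical arrow of diagram~(\ref{diagram_lemma_complete_extension}).

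Finally, the right-hand square of diagram~(\ref{diagram_lemma_complete_extension}) commutes because $S_{0}=S|_{X_{0}}$ and $S_{1}=S|_{X_{1}}$ agree on $X_{0}$, giving $S_{1}\circ\iota_{01}=S_{0}$, and the left-hand square commutes by the factorization of the previous paragraph. The only friction I anticipate is bookkeeping: one must check that the two pullbacks literally coincide (not merely up to canonical isomorphism) so that a single section $\sigma$ yields the \emph{same} map $\overline{\sigma}_{0}$ for both indices. Once this literal equality is secured, as it is by taking the standard set-theoretic realization of the pullback, both conclusions follow by a direct diagram chase.
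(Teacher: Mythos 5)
Your proof is correct, and it takes a sharper route than the paper's. Both arguments pivot on the same key fact---that $S_0$ and $S_1$ are restrictions of the one functional $S$, so the induced maps $\overline{S}_{0,\jmath_0}$ and $\overline{S}_{1,\jmath_1}$ on $\operatorname{Conn}(\hat{P};\hat{\mathfrak{g}})/\operatorname{Gau}_{\hat{G}}(\hat{P})$ coincide---but you exploit it differently. The paper links the two copies of diagram (\ref{diagram_1_thm_complete_extension}) through the inclusion $X_0\hookrightarrow X_1$ and then invokes the universal property of the pullback to manufacture a comparison map $\eta:\operatorname{Pb}(S^G,\overline{S}_{0,\jmath_0})\rightarrow\operatorname{Pb}(S^G,\overline{S}_{1,\jmath_1})$ with $\overline{\jmath}_0=\overline{\jmath}_1\circ\eta$, deducing injectivity of $\eta$ from the fact that $\overline{\jmath}_0$ is a monomorphism. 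You instead observe that the two cospans are literally identical, so in the standard set-theoretic realization the two pullbacks are the same set with the same projections; in effect you show the paper's $\eta$ is the identity. This yields the strictly stronger conclusion $X_0(\sigma)=X_1(\sigma)$ (not just $X_0(\sigma)\subset X_1(\sigma)$): the two complete extensions share the same correction subspace. Your worry about literal equality versus canonical isomorphism is also dissolved cleanly, since $X_i(\sigma)$ is defined as the \emph{image} of $\jmath_i\circ\sigma\circ\overline{\jmath}_i$, and images are unchanged when the pullback realization is altered by its canonical isomorphism---indeed $X_i(\sigma)=\sigma\bigl(\overline{S}_{i,\jmath_i}^{-1}(\operatorname{img}(S^G))\bigr)$, which is manifestly realization-independent and visibly equal for $i=0,1$. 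What each approach buys: yours is more elementary and gives the sharper equality; the paper's universality argument is more robust, since it would survive the weaker hypothesis that the two functionals merely commute with the inclusion rather than being restrictions of a single $S$, and it stays within the categorical machinery (stability of monomorphisms, universal properties) used throughout the rest of the paper.
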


\begin{proof}[Idea of proof] The proof is a bit straightforward. The idea is the following. First build diagrams (\ref{diagram_1_thm_complete_extension}) for $X_0$ and $X_1$ and notice that they are linked by the inclusion $X_0 \hookrightarrow X_1$, as below. Since $S_0$ and $S_1$ arose as restrictions of the same functional $S$, it follows that the entire diagram commutes.
$$
\xymatrix{ \ar@/_{1cm}/[dd]_{\jmath_1} \ar@{^(->}[d]_{\jmath _0} \operatorname{Conn}(\hat{P};\hat{\mathfrak{g}}) \ar[r]^-{\pi} & \operatorname{Conn}(\hat{P};\hat{\mathfrak{g}})/ \operatorname{Gau}_{\hat{G}}(\hat{P}) \ar@{_(->}[d]^{\overline{\jmath _0}} \ar@/^{1.5cm}/[dd]^{\overline{\jmath _1}} \\
\ar@/_{1cm}/[dd]_{S_0} \ar@{^(->}[d] X_0 \ar[r]^-{\pi} & X_0/ \operatorname{Gau}_{\hat{G}}(\hat{P}) \ar@{_(->}[d] \ar@/_{0.5cm}/[ddl]_{\overline{S}_{0}}\\
\ar[d]_{S_1} X_1 \ar[r]_(.5){\pi} & X_1/ \operatorname{Gau}_{\hat{G}}(\hat{P}) \ar@/^/[ld]^{\overline{S}_{1}} \\
\mathbb{R}}
$$
Taking the pullback of each diagram as in (\ref{diagram_2_thm_complete_extension}) and using the universality of pullbacks we get an universal map $\eta:\operatorname{Pb}(S^{G},\overline{S}_{0,\jmath _0}) \rightarrow \operatorname{Pb}(S^{G},\overline{S}_{1,\jmath_1}) $ such that $\overline{\jmath}_0 = \overline{\jmath}_1 \circ \eta $. Since $\overline{\jmath}_0$ is a monomorphism, this implies that $\eta$ is a monomorphism too and therefore an injection. The commutativity of the entire diagram involving both pullbacks implies the commutativity of (\ref{diagram_lemma_complete_extension}), finishing the proof. 
\end{proof}

\section*{Acknowledgements} Y. X. Martins was supported by CAPES (grant number 88887.187703/2018-00) and L. F. A. Campos was supported by CAPES (grant number 88887.337738/2019-00).

\bibliographystyle{abbrv}
\bibliography{references}

\begin{thebibliography}{10}

\bibitem{gaunt_category}
C.~Barwick and C.~Schommer-Pries.
\newblock On the unicity of the homotopy theory of higher categories.
\newblock {\em arXiv:1112.0040}, 2011.

\bibitem{beyond_SM_2}
J.~Beacham, C.~Burrage, D.~Curtin, A.~De~Roeck, J.~Evans, J.~L. Feng, C.~Gatto,
  S.~Gninenko, A.~Hartin, I.~Irastorza, et~al.
\newblock {Physics beyond colliders at CERN: beyond the Standard Model working
  group report}.
\newblock {\em Journal of Physics G: Nuclear and Particle Physics},
  47(1):010501, 2019.

\bibitem{lorentzian}
J.~K. Beem.
\newblock {\em Global lorentzian geometry}.
\newblock Routledge, 2017.

\bibitem{xi_homotopy}
R.~Bencivenga.
\newblock Approximating groups of bundle automorphisms by loop spaces.
\newblock {\em Transactions of the American Mathematical Society},
  285(2):703--715, 1984.

\bibitem{functors_objects}
D.~Cancila, F.~Honsell, and M.~Lenisa.
\newblock Functors determined by values on objects.
\newblock {\em Electronic Notes in Theoretical Computer Science}, 158:151--169,
  2006.

\bibitem{string}
P.~Candelas, G.~T. Horowitz, A.~Strominger, and E.~Witten.
\newblock Vacuum configurations for superstrings.
\newblock In {\em Current Physics--Sources and Comments}, volume~1, pages
  49--77. Elsevier, 1988.

\bibitem{YM_deformation_CS}
A.~S. Cattaneo, P.~Cotta-Ramusino, F.~Fucito, M.~Martellini, M.~Rinaldi,
  .~Tanzini, and M.~Zeni.
\newblock {Four-dimensional Yang--Mills theory as a deformation of topological
  BF theory}.
\newblock {\em Communications in Mathematical Physics}, 197(3):571--621, 1998.

\bibitem{next_minimal_SM_3}
A.~G. Cohen, D.~B. Kaplan, and A.~E. Nelson.
\newblock {The more minimal supersymmetric Standard Model}.
\newblock {\em Physics Letters B}, 388(3):588--598, 1996.

\bibitem{SM_extension_1}
D.~Colladay and V.~A. Kosteleck{\`y}.
\newblock {CPT violation and the Standard Model}.
\newblock {\em Physical Review D}, 55(11):6760, 1997.

\bibitem{SM_extension_2}
D.~Colladay and V.~A. Kosteleck{\`y}.
\newblock {Lorentz-violating extension of the Standard Model}.
\newblock {\em Physical Review D}, 58(11):116002, 1998.

\bibitem{beyond_SM_1}
J.~Ellis.
\newblock {Outstanding questions: physics beyond the Standard Model}.
\newblock {\em Philosophical Transactions of the Royal Society A: Mathematical,
  Physical and Engineering Sciences}, 370(1961):818--830, 2012.

\bibitem{next_minimal_SM}
U.~Ellwanger, C.~Hugonie, and A.~M. Teixeira.
\newblock {The next-to-minimal supersymmetric Standard Model}.
\newblock {\em Physics Reports}, 496(1-2):1--77, 2010.

\bibitem{forgacs1980}
P.~Forgåcs and N.~S. Manton.
\newblock {Space-time symmetries in gauge theories}.
\newblock {\em Communications in Mathematical Physics}, 72(1):15--35, 1980.

\bibitem{advanced_field_theory}
G.~Giachetta, L.~Mangiarotti, et~al.
\newblock {\em Advanced classical field theory}.
\newblock World Scientific, 2009.

\bibitem{Harnad}
J.~Harnad, S.~Shnider, and J.~Tafel.
\newblock Group actions on principal bundles and dimensional reduction.
\newblock {\em Letters in Mathematical Physics}, 4:107–113, March 1980.

\bibitem{YM_deformation_2}
A.~P. Isaev and Z.~Popowicz.
\newblock {q-trace for quantum groups and q-deformed Yang-Mills theory}.
\newblock {\em Physics Letters B}, 281(3-4):271--278, 1992.

\bibitem{elephant}
P.~T. Johnstone et~al.
\newblock {\em Sketches of an elephant: A topos theory compendium}, volume~2.
\newblock Oxford University Press, 2002.

\bibitem{YM_deformation_1}
V.~V. Khoze.
\newblock {Amplitudes in the $\beta$-deformed conformal Yang-Mills}.
\newblock {\em Journal of High Energy Physics}, 2006(02):040, 2006.

\bibitem{QG_1}
C.~Kiefer.
\newblock Quantum gravity: general introduction and recent developments.
\newblock {\em Annalen der Physik}, 15(1-2):129--148, 2006.

\bibitem{kobayashi1996foundations}
S.~Kobayashi and K.~Nomizu.
\newblock {\em Foundations of Differential Geometry}.
\newblock Number v. 1 in A Wiley Publication in Applied Statistics. Wiley,
  1996.

\bibitem{SM_extension_3}
V.~A. Kosteleck{\`y}.
\newblock {Gravity, Lorentz violation, and the Standard Model}.
\newblock {\em Physical Review D}, 69(10):105009, 2004.

\bibitem{martins2019geometric}
Y.~X. Martins and R.~J. Biezuner.
\newblock Geometric obstructions on gravity.
\newblock {\em arXiv:1912.11198}, 2019.

\bibitem{EHP_yuri_rodney}
Y.~X. Martins and R.~J. Biezuner.
\newblock {Topological and geometric obstructions on
  Einstein--Hilbert--Palatini theories}.
\newblock {\em Journal of Geometry and Physics}, 142:229--239, 2019.

\bibitem{emergence_yuri}
Y.~X. Martins and R.~J. Biezuner.
\newblock Towards axiomatization and general results on strong emergence
  phenomena between lagrangian field theories.
\newblock {\em arXiv:2004.13144}, 2020.

\bibitem{category_extension}
Y.~X. Martins, L.~F.~A. Campos, and R.~J. Biezuner.
\newblock {On Extensions of Yang-Mills-Type Theories, Their Spaces and Their
  Categories}.
\newblock 2020.

\bibitem{gauge_breaking_YMT_extensions}
Y.~X. Martins, L.~F.~A. Campos, and R.~J. Biezuner.
\newblock {On gauge-breaking and incomplete extensions of Yang-Mills-Type
  Theories}.
\newblock {\em In preparation (final stage)}, 2020.

\bibitem{next_minimal_SM_2}
D.~M. Pierce, J.~A. Bagger, K.~T. Matchev, and R.-j. Zhang.
\newblock {Precision corrections in the minimal supersymmetric Standard Model}.
\newblock {\em Nuclear Physics B}, 491(1-2):3--67, 1997.

\bibitem{higher_dimension_2}
L.~Randall and R.~Sundrum.
\newblock An alternative to compactification.
\newblock {\em Physical Review Letters}, 83(23):4690, 1999.

\bibitem{higher_dimension_1}
L.~Randall and R.~Sundrum.
\newblock Large mass hierarchy from a small extra dimension.
\newblock {\em Physical Review Letters}, 83(17):3370, 1999.

\bibitem{QG_2}
C.~Rovelli.
\newblock {\em Quantum gravity}.
\newblock Cambridge University Press, 2004.

\bibitem{tensor_gauge_1}
G.~Savvidy.
\newblock {Non-Abelian tensor gauge fields: generalization of Yang--Mills
  theory}.
\newblock {\em Physics Letters B}, 625(3-4):341--350, 2005.

\bibitem{tensor_gauge_2}
G.~Savvidy.
\newblock {Non-Abelian tensor gauge fields I}.
\newblock {\em International Journal of Modern Physics A},
  21(23n24):4931--4957, 2006.

\bibitem{tensor_gauge_3}
G.~Savvidy.
\newblock {Non-Abelian tensor gauge fields II}.
\newblock {\em International Journal of Modern Physics A},
  21(23n24):4959--4977, 2006.

\bibitem{SW_map}
N.~Seiberg and E.~Witten.
\newblock String theory and noncommutative geometry.
\newblock {\em Journal of High Energy Physics}, 1999(09):032, 1999.

\bibitem{PhysRevD.99.115026}
T.~Strobl.
\newblock {Leibniz-Yang-Mills gauge theories and the $2$-Higgs mechanism}.
\newblock {\em Physical Review D}, 99:115026, Jun 2019.

\bibitem{review_particle_physics}
M.~Tanabashi and et~al.
\newblock Review of particle physics.
\newblock {\em Physical Review D}, 98:030001, Aug 2018.

\end{thebibliography}

\end{document}